%% file: ratps-arxiv.tex
\documentclass[runningheads,envcountsame]{llncs}

\input{packages}

\input{macro}

\title{Program Synthesis for Non-Linear Real Arithmetic: Going Beyond Realizability}
\titlerunning{Program Synthesis for {\qfnra}: Going Beyond Realizability}

\author{
S.~Akshay\inst{1} \and
Supratik Chakraborty\inst{1} \and
R.~Govind\inst{2,3}
\and
Aniruddha R. Joshi\inst{4}
}

\authorrunning{S. Akshay, S. Chakraborty, R. Govind and A. R. Joshi}

\institute{
  IIT Bombay, India, 
  \email{\{akshayss, supratik\} @ cse.iitb.ac.in}\\
  \and
  The Institute of Mathematical Sciences, Chennai, India,
  \email{govind@imsc.res.in}
  \and
  Homi Bhabha National Institute, Mumbai, India
  \and
  University of California at Berkeley, Berkeley, USA, 
  \email{aniruddhajoshi@berkeley.edu}\\
}

\begin{document}

\maketitle

\begin{abstract}
  We study the problem of synthesizing programs from
  non-linear real arithmetic ({\qfnra}) specifications.  Existing
  techniques, such as syntax-guided synthesis ({\sygus}), fail to
  synthesize programs when the specification is \emph{unrealizable}.
  We argue this is unsatisfactory in many situations, and
  aim to synthesize programs from arbitrary {\qfnra} specifications,
  such that for any input, the synthesized program either produces
  outputs satisfying the specification or reports
  non-existence of any such output. To avoid rounding errors inherent
  in floating-point arithmetic, we restrict
  our programs to work on rational inputs and outputs.

  We first show that our variant of the synthesis problem is as hard
  as a long-standing open problem in number theory, and that
  synthesizing loop-free programs from arbitrary NRA specifications
  with rational inputs and outputs is impossible in general.  Second,
  we present a sound and complete synthesis algorithm for the case
  where the specification involves a single output variable.  We also
  show that for realizable specifications, a program generated by
  {\sygus} for {\qfnra} (real inputs and outputs) serves as a solution
  to our problem, where inputs and outputs are rationals. Third, we
  provide a sound (but necessarily incomplete) synthesis algorithm for
  the general case of specifications.  We have implemented our
  approach in a prototype tool called \nqs\ that solves many
  benchmarks beyond the reach of state-of-the-art SyGuS tools, even
  when we render the specifications realizable.
\end{abstract}

\input{intro}
\input{prelims}

\input{hardness}

\input{algorithm}
\input{experiments}
\input{conclusion}
    
\bibliographystyle{splncs04}
\bibliography{rat-ps}

\clearpage
\appendix

\input{app-stability}

\input{app-hardness}

\input{example}

\input{app-complete}

\input{app-benchmarks}

\end{document}

%% file: packages.tex
\usepackage{amssymb}
\usepackage{amsmath}
\usepackage{amsfonts}

\usepackage{float}
\usepackage{subcaption}
\usepackage{graphicx}
\usepackage{hyperref}
\usepackage{adjustbox}

\usepackage{authblk}

\usepackage{thmtools}
\usepackage{thm-restate}

\usepackage{hyperref}

\usepackage{cleveref}
\usepackage{wrapfig}

\usepackage{tikz}
\usetikzlibrary{arrows.meta}
\usepackage{bussproofs}
\usepackage{stackengine}
\usepackage{multirow}

\usepackage{logicproof}[2014/03/20]
\usepackage[noend]{algpseudocode}
\usepackage{algorithm,algorithmicx}
\usepackage{pifont}
\usepackage{stmaryrd}
\usepackage{pdfpages}

\usepackage{booktabs}
\usepackage{caption}
\usepackage{cite}
\usepackage{lineno}

\usepackage{tcolorbox}
\usepackage{longtable}

\newcommand{\eat}[1]{}

\definecolor{darkgreen}{HTML}{006400}

%% file: macro.tex
\newcommand{\rat}{\ensuremath{\mathbb{Q}}}
\newcommand{\intgr}{\ensuremath{\mathbb{Z}}}
\newcommand{\nat}{\ensuremath{\mathbb{N}}}
\newcommand{\real}{\ensuremath{\mathbb{R}}}
\newcommand{\intfactor}{\ensuremath{\mathsf{IntFactor}}}
\newcommand{\nstrict}{\ensuremath{\mathsf{NonStrict}}}
\newcommand{\candsol}{\ensuremath{\mathsf{CandRatRoot}}}
\newcommand{\var}{\ensuremath{\mathbf{V}}}
\newcommand{\inVar}{\ensuremath{\mathbf{X}}}
\newcommand{\invar}{\ensuremath{\mathbf{X}}}
\newcommand{\outvar}{\ensuremath{\mathbf{Y}}}
\newcommand{\XX}{\ensuremath{\mathbf{X}}}
\newcommand{\YY}{\ensuremath{\mathbf{Y}}}
\newcommand{\ZZ}{\ensuremath{\mathbf{Z}}}
\newcommand{\VV}{\ensuremath{\mathbf{V}}}
\newcommand{\AAA}{\ensuremath{\mathbf{A}}}
\newcommand{\BBB}{\ensuremath{\mathbf{B}}}
\newcommand{\ntC}{\ensuremath{\mathsf{PCnstr}}}

\newcommand{\ntP}{\ensuremath{\mathsf{PTerm}}}

\newcommand{\tvar}{\ensuremath{\mathsf{Var}}}

\newcommand{\qfnqa}{\textsf{NRA}}

\newcommand{\qfnra}{\textsf{NRA}}
\newcommand{\prob}{\ensuremath{\mathcal{RPS}_{\rat}}}
\newcommand{\probreal}{\ensuremath{\mathcal{RPS}}}

\newcommand{\HTP}{\mathcal{HTP}_{\mathbb{Q}}}

\newcommand{\gram}{\ensuremath{\mathcal{G}}}
\newcommand{\lfgram}{\ensuremath{\mathcal{G}^{\mathit{lf}}}}
\newcommand{\prog}{\ensuremath{\mathsf{Prog}}}

\newcommand{\progone}[1]{\ensuremath{\mathsf{Prog1OP}_{#1}}}
\newcommand{\proggen}[1]{\ensuremath{\mathsf{Prog}}}
\newcommand{\rprog}[1]{\ensuremath{\mathsf{RProg}_{#1}}}

\newcommand{\algqfnras}{\textsf{RationalSamplesNRA}}

\newcommand{\algqfnqa}{\textsf{SynthNQA}}

\newcommand{\cad}{\textsf{CAD}}

\newcommand{\step}{\ensuremath{\mathsf{count}}}
\newcommand{\cutoff}{\ensuremath{\mathsf{timeout}}}

\newcommand{\true}{\textbf{ \texttt{true} }}
\newcommand{\false}{\textbf{ \texttt{false} }}

\newcommand{\nqs}{\textsf{NQSynth}}

\newcommand{\zthree}{\textsf{Z3}}
\newcommand{\cvcfive}{\textsf{CVC5}}
\newcommand{\qepcad}{\textsf{QEPCAD}}
\newcommand{\modenum}{\textsf{ModEnum}}
\newcommand{\sygus}{\textsf{SyGuS}}

%% file: intro.tex
\section{Introduction}~\label{sec:intro}
Automated program synthesis, or generating programs automatically from
specifications, is arguably one of the holy grails of computer
science. While the problem has a long and storied
history~\cite{Church1962,Turing1936}, recent years have seen a surge
of interest in synthesizing programs from logical specifications in
specialized theories, viz. propositional logic, theory of bit-vectors,
(non-)linear real/integer arithmetic, and the
like~\cite{Brown01,Mccallum99,Mccallum01,Strzebonski00,nlsat-ivanovic-demoura-ijcar12,chen-morenzo-maza-jsc16,sadeghimanesh-england-2021,kremer-abraham-jsc20,abraham-davenport-england-kremer-20,KuncakMPS10,LPAR}.
One such theory, with applications spanning multiple
domains~\cite{DoratoYA97,HongLS97,Jirstrand97,LafferrierePY01,McCallum95,Weispfenning01},
is the quantifier-free theory of {\bfseries N}on-linear {\bfseries R}eal {\bfseries A}rithmetic, also called {\qfnra}.

Over the last decade, {\bfseries Sy}ntax-{\bfseries Gu}ided {\bfseries S}ynthesis, or {\sygus}, has emerged as a powerful paradigm for
program synthesis from {\qfnra} (and other first-order theory)
specifications~\cite{sygus-Alur,sygus-fmcad,sygus-fisman,sygus-acm}.
At an abstract level, a {\sygus} tool systematically searches through
a space of candidate programs guided by a grammar, relying on logical
checks to determine if a candidate program satisfies the given
specification \emph{for all possible} input values.  In many practical
applications, however, the natural specification is such that there
are some input values for which no output values satisfy the
specification.  Such specifications are called \emph{unrealizable} in
the program synthesis literature.  In such cases, the logical check
referred to above must fail for every candidate program, preventing a
{\sygus} tool from synthesizing any program at all.  This can be
highly unsatisfactory for an end-user, who is left with no program
simply because some (perhaps never-to-be-used) input values do not
admit any correct outputs. From a practical point of view, it is often
adequate to have a program that generates correct outputs whenever the
inputs admit existence of outputs satisfying the specification, and
faithfully reports the absence of correct outputs otherwise.
Unfortunately, {\sygus} tools fail to generate such programs for
unrealizable specifications.  In this paper, we propose a novel
approach to address this problem, that works regardless of whether the
specification is realizable.  Interestingly, our approach solves
problems beyond the capabilities of state-of-the-art {\sygus} tools,
even for realizable specifications. This effectively adds to the
repertoire of known techniques for program synthesis from {\qfnra}
specifications.

To illustrate how unrealizability comes in the way of synthesizing
programs using {\sygus} tools, consider the specification $0.9 \le x^2
+ y^2 \le 1$, where $x$ is a real-valued input and $y$ is a
real-valued output.  It is easy to see that this specification is
unrealizable (e.g. if $x=2$, there is no real $y$ that satisfies the
specification).  State-of-the-art {\sygus} tools like CVC5~\cite{CVC5}
therefore do not synthesize any program, and simply report this
specification as unrealizable.  Unfortunately, this may not be useful
to the end-user, who wants a program to calculate a value of $y$
satisfying the specification, whenever possible.  A work-around would
be to modify the specification and render it realizable before using a
{\sygus} tool.  For example, the above specification could be modified
to $(-1 \le x \le 1) \Rightarrow (0.9 \le x^2 + y^2 \le 1)$, where $-1
\le x \le 1$ is the \emph{weakest pre-condition} on $x$ that admits a
real $y$ satisfying $0.9 \le x^2 + y^2 \le 1$.  Violation of the weakest
pre-condition can indeed be used by the program to detect when no real
output satisfying the specification exists.  Unfortunately, finding
the weakest pre-condition is equivalent to existentially projecting
out output variables from the specification.  This is technically
involved and computationally expensive for {\qfnra} in
general~\cite{qe-nfqra-cad-lb}, and is a tall ask from the user. A
primary goal of this paper is to develop techniques that can help
users in such cases, without worrying about realizability or weakest
pre-condition calculation.

An important point that deserves attention when discussing programs
with real number arithmetic is the representation of real numbers.  It
is well known that arithmetic with fixed precision floating point
representation of real numbers can incur significant rounding errors
(see~\cite{goldberg} for an excellent exposition on this
topic). 
We discuss the challenge posed due to this problem in Appendix~\ref{app:intro}.
For several applications (viz. scientific computing, high volume
financial transactions etc.) such rounding errors are unacceptable,
and we must use arbitrary precision floating point numbers or
rationals, implemented in well-engineered libraries like
GMP~\cite{GMP2000}.  This comes at a price: software implementing
arbitrary precision floating point or rational arithmetic are
significantly slower than fixed-precision floating point arithmetic,
implemented through dedicated hardware in modern processors.  However,
this tradeoff between accuracy and performance is inevitable in
programs using real arithmetic. In our work, we wish to synthesize
programs that have \emph{zero rounding errors}.  Since rationals
strictly subsume arbitrary precision floating point numbers, we choose
to synthesize programs from {\qfnra} specifications, \emph{assuming
rational inputs and outputs}.  This yields a new variant of the
program synthesis problem from {\qfnra} specifications, that is of
independent interest.  Interestingly, this also allows us to exploit
properties of rational numbers to go beyond the capabilities of
{\sygus} tools for {\qfnra} specifications, as shown later in the
paper.

The above discussion motivates the following problem definition.  Let $\rat$ denote the set of rational numbers, and $\inVar$ and $\outvar$ denote sequences of input and output variables respectively.  We use $|\inVar|$ (resp. $|\outvar|$) to denote the count of variables in $\inVar$ (resp. $\outvar$).
\begin{tcolorbox}
  Given an {\qfnra} specification $\varphi(\inVar, \outvar)$, generate a
  program $\prog$ s.t.
  \begin{itemize}
  \item All conditions and expressions in $\prog$ are {\qfnra} formulas and
    terms
  \item For every $\AAA \in \rat^{|\inVar|}$, $\prog$ run with
    $\AAA$ as input terminates, and produces one of:
    \begin{itemize}
      \item $\bot$ (representing ``no output'') if $\forall \YY \in
        \rat^{|\YY|}\colon\neg\varphi(\AAA, \YY)$ holds, or
      \item $\BBB \in \rat^{|\outvar|}$ such that $\varphi(\AAA, \BBB)$ holds
    \end{itemize} 
  \end{itemize}
\end{tcolorbox}
\noindent We allow $\prog$ to have loops, conditional statements and
assignment statements (a formal grammar appears in
Section~\ref{sec:prelims}).  We call the above problem \emph{real
program synthesis with rational inputs and outputs}, and use $\prob$
to denote the class of all instances of this problem.
Further, we denote by $\probreal$ the variant of the above problem where $\rat$ in the problem statement is replaced by $\real$ (set of reals).

The primary contributions of our paper can now be summarized as follows:
\begin{enumerate}
\item {\it Theoretical results} (Section~\ref{sec:hardness})
\begin{enumerate}
\item We show that $\prob$ and Hilbert's tenth problem over rationals
  (a long-standing open problem \cite{eisentraeger2016easy}) are
  inter-reducible.
\item We show the impossibility of solving $\prob$ using loop-free
  programs.
  \end{enumerate}
\item {\it Algorithms} (Section~\ref{sec:algorithm})
  \begin{enumerate}
  \item We show that if $\varphi(\inVar,\outvar)$ is realizable over
    reals, any program synthesized by a {\sygus} tool for {\qfnra} is
    also a solution to $\prob$, but not the other way round.
  \item We provide a sound and complete algorithm for $\prob$ when
    $|\outvar| = 1$.
  \item When $|\outvar| > 1$, we provide a sound and (necessarily)
    incomplete procedure for $\prob$.  We can effectively detect when
    a solution generated by our algorithm solves the instance of
    $\prob$ exactly.
  \end{enumerate}
\item {\it Implementation and experiments} (Section~\ref{sec:experiments}):
  We implement our algorithm in a prototype tool called \nqs\ and use
    it to synthesize programs for a suite of {\qfnra} specifications.
    Our experiments show that \nqs\ correctly synthesizes programs for
    several non-trivial specifications, many of which are
    unrealizable.  Furthermore, our approach is significantly more
    performant than two competing approaches on many benchmarks.
\end{enumerate}

\input{related-works}

%% file: related-works.tex
\paragraph*{Related work}
Program synthesis from {\qfnra} specifications with real (not
rational) inputs and outputs is closely related to quantifier
elimination in the existential theory of reals
(ETR)~\cite{tarski1951}, and to techniques for finding real roots of
polynomials~\cite{basu,RRI-latest,Lazard2009,Vincent1973}. While
existentially quantifying all outputs from a given specification
yields the weakest pre-condition on inputs, for specifications with a
single real output, polynomial root finding techniques can be used to
find disjoint intervals for all admissible real values of the output.
However, such root finding techniques do not distinguish between
rational and irrational roots, and cannot guarantee zero rounding
errors with floating-point implementations.  There has been a long and
illustrious line of work on quantifier elimination in ETR, starting
with the seminal work of Tarski~\cite{tarski1951}, followed by
Collins' celebrated CAD algorithm~\cite{CAD,CAD2}, and several
subsequent adaptations~\cite{Brown01,mccallum93,QEPCAD}.  Similarly,
finding real roots of polynomials is an intensely studied topic in
algebraic geometry (see~\cite{basu} for an excellent exposition).  
We leverage this rich line of work in our solution, while being
cognizant of the high complexity of algorithms like CAD, and of the
limitations of polynomial root finding techniques in the rational
setting.

Earlier approaches to program synthesis from {\qfnra} specifications
can be broadly classified as template-based approaches and
syntax-guided approaches. The work
of~\cite{sketch-based,GoharshadyHMM23,GoharshadyHMM23-arxiv} present
template-based techniques for synthesizing programs by filling holes
in user-provided sketches or templates. Syntax-guided or {\sygus}
approaches~\cite{sygus-Alur,sygus-fmcad,sygus-fisman,sygus-acm} have
also proven to be very effective with state-of-the-art
competition-winning {\sygus} tools, such as CVC5~\cite{CVC5} and
DryadSynth~\cite{DryadSynth}, synthesizing non-trivial programs from a
large class of specifications, but only if the specification is
realizable.  Our work neither fits the template-based synthesis
paradigm, nor can it be viewed as an instance of {\sygus}.  Instead,
it makes careful use of a handful of generic modules for reasoning
about polynomials to synthesize programs from specifications that go
beyond the capabilities of existing techniques.

There are also works that synthesize programs from restricted real/integer arithmetic specifications. For example, ~\cite{KuncakMPS10,KuncakJournal} discuss techniques to solve program synthesis for specifications that are combinations of \emph{linear} constraints over rational/integer variables, and other SMT theories. Similarly,~\cite{LPAR} considers specifications given as non-linear constraints over \emph{bounded integers}. However, these techniques do not generalize easily to
specifications in {\qfnra} in its full generality, and hence we don't
compare against them.

%% file: prelims.tex
\section{Preliminaries and Problem statement}~\label{sec:prelims}
Let $\nat$, $\intgr$, $\rat$, $\real$ denote the set of natural numbers, 
integers, rational numbers and real numbers, respectively.
Let $\inVar = \{x_1, \ldots x_m\}$ and $\outvar = \{y_1, \ldots y_n\}$ be disjoint sets of variables, representing program inputs and outputs respectively. A \emph{relational specification}, or simply a \emph{specification}, of a program is a first-order formula $\varphi$ with free variables $\var=\inVar \cup \outvar$.  
In this paper, we focus on specifications in the
\emph{quantifier-free theory of non-linear real arithmetic} ({\qfnra}), where variables are assumed to take values from $\real$.

Thus, our specifications are Boolean combinations of polynomial inequalities over real-valued input and output variables.  However, for reasons related to rounding errors as explained in Section~\ref{sec:intro}, we will restrict our discussion to rational input and output values. 
We use lower-case letters $v, x, y, z$, possibly with subscripts, to
denote variables, while constants are represented by letters $a, b$.
Sets of variables are denoted as $\VV = \{v_1, \ldots, v_k\}$, and
similarly for $\XX, \YY, \ZZ$. Likewise, sets of constants are denoted
$\AAA = \{a_1, \ldots ,a_k\}$. A monomial over $\VV$ is a (repeated)
product of variables in $\VV$, written as $v_1^{r_1}v_2^{r_2}\cdots
v_k^{r_k}$, where $r_1, \ldots r_k \in \nat \cup \{0\}$.  A polynomial
is a rational weighted sum of monomials, i.e. $\sum_{j=1}^t a_jm_j$,
where $a_j \in \rat$ and $m_j$ is a monomial.  Polynomials over
$\VV$ are represented as $p(\VV), q(\VV), r(\VV)$ etc.
A polynomial inequality is a constraint of the form $p(\VV) \bowtie 0$, where $\bowtie ~\in \{<, >, \le, \ge\}$.  Polynomial constraints over $\VV$ are Boolean combinations of polynomial inequalities over $\VV$, and are represented as $\varphi(\VV), \psi(\VV)$ etc.
For a constant $a$ and $v \in \VV$, we use $\varphi[v \mapsto a]$ to denote the constraint obtained by substituting $a$ for $v$ in $\varphi$.
If $\AAA$ denotes the tuple of constants $(a_1, \ldots a_k)$ and $\VV$ denotes the tuple of variables $(v_1, \ldots v_k)$, we use $\varphi[\VV \mapsto \AAA]$ or simply $\varphi(\AAA)$ to denote $\big(\cdots(\varphi[v_1 \mapsto a_1])\cdots\big)[v_k \mapsto a_k]$.

\paragraph*{Problem Statement.} Given an {\qfnra} specification
$\varphi(\XX, \YY)$, our goal, as outlined in Sec.~\ref{sec:intro}, is
to synthesize a terminating program $\prog$ that takes as input
arbitrary rational values (say, $\AAA \in \rat^{|\XX|}$),
and either generates $\BBB \in\rat^{|\YY|}$ such that $\varphi(\AAA,
\BBB)$ holds, or reports that no such $\BBB$ exists in $\rat^{|\YY|}$.
In the latter case, we allow our synthesized programs to return a
special symbol $\bot$. A program satisfying this property is said to
\emph{realize} the \qfnra\ specification $\varphi(\inVar,\outvar)$.
As we show shortly, synthesizing a program that realizes a
specification in \qfnra\ is not possible in general, unless some
long-standing problems in number-theory are resolved.  Hence, we
propose to develop a synthesis algorithm that is sound but incomplete
in general.  In other words, whenever the synthesized program returns
$\BBB \in \rat^{|\YY|}$ for an input $\AAA \in \rat^{|\XX|}$, then
$\varphi(\AAA, \BBB)$ indeed holds.  However, there may exist some
other $\AAA \in \rat^{|\XX|}$ for which the synthesized program
returns $\bot$ although $\exists \YY\; \varphi(\AAA, \YY)$ holds.  For
some sub-classes of specifications, however, we can indeed obtain a
sound and complete synthesis algorithm.  Thus, for specifications in
these sub-classes, the synthesized program returns $\bot$ for an input
$\AAA$ iff $\forall \YY\;\neg\varphi(\AAA, \YY)$ holds. The formal
definition of the problem statement has already been given in the
introduction.

\paragraph*{Form of the synthesized program} 
To complete the description of the problem definition, we need to
specify the form of programs that we wish to synthesize. For this, we
give a simple context-free grammar $\gram$ in
Fig.~\ref{fig:prog-grammar} for imperative programs with variables
that take rational values.  Programs derived from this grammar use
polynomial terms (including rational constants) in assignments
(denoted \ntP) and polynomial constraints, i.e, Boolean combinations
of polynomial inequalities in conditionals (denoted \ntC).
A program returns either a special symbol $\bot$ or a tuple of
rational values of variables (denoted \tvar).
\begin{wrapfigure}[]{l}{0.65\textwidth}
\begin{center}
\begin{tabular}{rcl}
  $\mathsf{Prog}$ & ::=& $\mathsf{AssignStmt}\mid \mathsf{CondStmt}\mid \mathsf{LoopStmt}\mid$ \\
   & & $\mathsf{RetStmt}\mid \mathsf{Prog}; \mathsf{Prog}$\\
  $\mathsf{AssignStmt}$ & ::= & $\tvar \leftarrow \ntP$ \\
    $\mathsf{CondStmt}$ & ::= & \textbf{if} (\ntC) \textbf{then} $\mathsf{Prog}$ \textbf{else} $\mathsf{Prog}$\\
    $\mathsf{LoopStmt}$ & ::= & \textbf{while} (\ntC) \textbf{do} $\mathsf{Prog}$\\
  $\mathsf{RetStmt}$ & ::= & \textbf{return} $\big(\mathsf{VarTuple}\big)\mid \textbf{return}\, \bot$\\
  $\mathsf{VarTuple}$ & ::= & $\tvar\mid\mathsf{VarTuple}, \tvar$\\
\end{tabular}
\end{center}
\caption{Grammar for programs}\label{fig:prog-grammar}
\end{wrapfigure}
It is not hard to see that programs derived from this grammar are Turing powerful (e.g. counter-machines can be encoded).
Let $\prog$ be a program in the language of the grammar $\gram$, and
let $\XX$ be a tuple of variables used in $\prog$.  We use the
notation $\prog(\XX)$ to denote the fact that $\prog$ does not assign
to any variable in $\XX$, and instead uses values of these variables
as inputs.  For $\AAA \in \rat^{|\XX|}$, we also use $\prog(\AAA)$ to
denote the result (return value) of running the program $\prog$ with
variables in $\XX$ set to constants in $\AAA$. A special
class of programs is obtained by disallowing the use of the
non-terminal $\mathsf{LoopStmt}$ in the above grammar.  Specifically,
with this restriction, we can only derive \emph{loop-free}
programs. Hence, we call $\gram$ with the above restriction as
$\lfgram$ (for loop-free grammar).

Note that the above grammar, though simple, is quite expressive.  For
instance, number-theoretic algorithms often used in the context of
rational numbers, such as real root isolation yielding rational bounds for
real roots of polynomials~\cite{RRI-latest}, or
application of rational root theorem~\cite{rrt} to a find rational
root of a polynomial, can be expressed as programs derivable from our
grammar.  This can be easily checked from publicly available
pseudo-code for these algorithms (see, e.g.~\cite{RRI}
and~\cite{rrt-code}).
We will make use this convenience later in the
paper.

%% file: hardness.tex
\section{Hardness Results and Theoretical Characterizations}
\label{sec:hardness}

In this section, we state our theoretical characterization and hardness results. We start by recalling Hilbert's tenth problem~\cite{mrdp} which asks, given a Diophantine equation (a polynomial equation with integer coefficients and a finite number of unknowns), whether we can decide if the equation has a solution over integers. This problem is famously undecidable by a sequence of results, which is collectively referred to as the MRDP theorem~\cite{mrdp}. Over reals, on the other hand, this problem is decidable and an algorithm for it follows from Tarski's quantifier elimination~\cite{tarski1951}. However, over rationals, it turns out that this problem is still open. More precisely, {\em the Hilbert's tenth problem over rationals, denoted $\HTP$}, asks given a polynomial equation with integer or rational coefficients and a finite number of unknowns, whether we can decide if it has a solution over rationals. Despite significant effort in this direction~\cite{eisentraeger2016easy}, this question remains a long-standing open problem. It turns out that the problem defined in the previous section is as hard as $\HTP$. Note that while $\HTP$ is a decision problem, $\prob$ is a synthesis problem; hence the statements are suitably framed.

\begin{restatable}{theorem}{thmhardness}
  \label{thm:hardness}
  \begin{enumerate}
    \item $\prob$ is at least as hard as $\HTP$.
    \item $\HTP$ is at least as hard as $\prob$.
  \end{enumerate}
\end{restatable}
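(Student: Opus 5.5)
Both directions are reductions between a decision problem and a synthesis problem, so we read ``at least as hard'' as follows. An oracle or algorithm for one problem yields an algorithm for the other.

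\textbf{Part 1 ($\HTP$ reduces to $\prob$).} Suppose we have an algorithm that, given any {\qfnra} specification $\varphi(\XX,\YY)$, outputs a program $\prog(\XX)$ in $\gram$ solving the $\prob$ instance. Take an $\HTP$ instance $p(y_1,\ldots,y_n)=0$ with rational coefficients. Build the specification $\varphi(x,\YY) \equiv (p(\YY)\le 0)\wedge(p(\YY)\ge 0)$, which encodes equality using only the allowed relations. Here $x$ is a dummy input on which $\varphi$ does not depend. Synthesize $\prog$ for $\varphi$. Every program in $\prob$ terminates on every rational input, so run $\prog(0)$. If it returns $\bot$, then by the specification of $\prob$ no rational $\BBB$ satisfies $p(\BBB)=0$. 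If it returns a tuple $\BBB$, then $p(\BBB)=0$. Either way we decide the instance. Note that running a program of $\gram$ on a rational input is effective, since it only uses exact rational arithmetic and exact sign tests of polynomials. This direction is routine.

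\textbf{Part 2 ($\prob$ reduces to $\HTP$).} Suppose $\HTP$ is decidable. Given $\varphi(\XX,\YY)$, we must produce a single program $\prog(\XX)$ that, on input $\AAA$, finds $\BBB\in\rat^{|\YY|}$ with $\varphi(\AAA,\BBB)$ or returns $\bot$ correctly. The plan has three steps.

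First, reduce the question ``does $\varphi(\AAA,\YY)$ have a rational solution?'' to a single polynomial equation over $\rat$:
\begin{itemize}
\item Put $\varphi(\AAA,\YY)$ in DNF, so it has a rational solution iff some disjunct does.
\item In each conjunct, replace a strict inequality $q>0$ by $q\cdot z^2=1$ with a fresh variable $z$. Replace $q\ge 0$ by $q = w_1^2+w_2^2+w_3^2+w_4^2$ with fresh $w_i$. This step is sound because a nonnegative rational is a sum of four rational squares, by Lagrange's theorem applied to $ab\cdot b^{-2}$ when $q=a/b$.
\item Combine a conjunction of equations $e_i=0$ into the single equation $\sum_i e_i^2=0$.
\item Combine a disjunction into $\prod_j E_j = 0$.
\end{itemize}
Call the resulting polynomial $P_{\AAA}(\YY,\ZZ)$. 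Its coefficients are polynomial in $\AAA$, and it has a rational root iff $\exists\YY\in\rat^{|\YY|}\,\varphi(\AAA,\YY)$.

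Second, the program calls the hypothetical $\HTP$ decision procedure on $P_{\AAA}$. That procedure is an algorithm, so by Turing-completeness of $\gram$ (counter machines can be encoded, as noted in Section~\ref{sec:prelims}) it can be compiled into a subprogram of $\gram$ operating on encodings of the rationals in $\AAA$. If the answer is ``no'', the program returns $\bot$.

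Third, if the answer is ``yes'', the program enumerates all tuples $\BBB\in\rat^{|\YY|}$ in a fixed effective order. This is done via a loop over integer numerator/denominator pairs, enumerated by dovetailing. For each $\BBB$ it tests $\varphi(\AAA,\BBB)$ directly as a $\ntC$ condition, and returns the first $\BBB$ that passes. This loop terminates exactly because the oracle guaranteed that a witness exists.

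The main obstacle is the careful encoding in the first two steps. One issue is that strict and non-strict inequalities must be turned into equations without losing rational solutions; this is where the four-squares argument is essential. The other is justifying that the $\HTP$ decider, which works on integer or rational coefficient representations, can be simulated inside $\gram$ on arbitrary rational inputs. For the latter I would argue that $\gram$ can extract numerators and denominators of an input rational by a bounded search loop: for $k=1,2,\ldots$, test whether $k\cdot a$ is an integer, where integrality is checked by counting. After that, any computable function can be implemented by counter-machine simulation.
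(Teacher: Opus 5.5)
Your overall plan matches the paper's proof: Part 1 is the trivial reduction via an input-free (or dummy-input) specification. Part 2 uses DNF, auxiliary variables to turn inequalities into equations, the four-squares theorem, a sum of squares to merge equations, an oracle call, and finally enumeration. However, one step in Part 2 is wrong.

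Replacing a strict inequality $q>0$ by $q\cdot z^2=1$ is correct over $\real$ but not over $\rat$. Over $\rat$, $\exists z\in\rat\,(qz^2=1)$ holds iff $q=(1/z)^2$ is the square of a nonzero rational, not iff $q>0$. Concretely, take $\varphi(x,y)\equiv(y-x\ge 0)\wedge(x-y\ge 0)\wedge(y>0)$ with input $x=2$. The rational output $y=2$ satisfies the specification, but your $P_{\AAA}$ forces $y=2$ and hence $2z^2=1$, which has no rational solution. The oracle therefore answers ``no'', and your program wrongly returns $\bot$. So your claim that $P_{\AAA}$ has a rational root iff $\exists\YY\in\rat^{|\YY|}\,\varphi(\AAA,\YY)$ fails in the backward direction. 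The forward direction holds, and so does the correctness of any returned tuple, since you test $\varphi(\AAA,\BBB)$ directly. What breaks is exactly the guarantee that $\bot$ is returned only when no rational output exists.

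The fix is the encoding the paper uses: replace $q>0$ by $(q\ge 0)\wedge(qz=1)$. Here $qz=1$ only forces $q\neq 0$ and always has the rational witness $z=1/q$, while $q\ge 0$ is handled by your (correctly justified) four-squares step. Equivalently, $q\cdot(w_1^2+w_2^2+w_3^2+w_4^2)=1$ works, since $1/q$ is a nonnegative rational and hence a sum of four rational squares. With this change the rest of your argument goes through. Merging the disjuncts into a single product equation, instead of querying the oracle once per disjunct as the paper does, is an inessential difference. So is your extra detail on compiling the $\HTP$ decider into $\gram$.
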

 \begin{proof}
    1. Given an instance of $\HTP$, i.e., a polynomial equation
    $\varphi(\var)$, we can build the specification $\varphi(\XX,\YY)$
    with empty inputs $\XX=\emptyset$ and output $\YY=\var$. This is
    an instance of $\prob$. Now suppose we are able to synthesize a
    terminating program $\prog(\inVar)$ for this specification. Then
    on running $\prog$ with no inputs (since $\XX = \emptyset$), if it
    returns $\bot$, this means that $\varphi(\var)$ has no solution.
    If, on the other hand, it returns $\BBB\in \rat^{|\YY|}$, then
    this is guaranteed to be a solution. Thus, $\prog$ can be used to
    build a decision procedure for $\HTP$.
  
    2. In the other direction, given an instance
    $\varphi(\inVar,\outvar)$ of $\prob$, we show below how to solve
    it using a decision procedure for $\HTP$ as a sub-routine.
    W.l.o.g. we assume $\varphi$ to be in disjunctive normal form,
    i.e. of the form $\bigvee_{i=1}^k \psi_i(\inVar,\outvar)$, where
    $\psi_i$ is a conjunction of polynomial inequalities (recall that
    negation of a polynomial inequality is itself a polynomial
    inequality). We first show that for each such $\psi_i$, we can
    obtain a single equi-satisfiable polynomial equation. This is done
    in three sub-steps.
    First, for every strict inequality of the form $x>y$ in $\psi_i$,
    we introduce a fresh rational variable, say $z'$, and replace the
    strict inequality by the conjunction $(x\ge y) \wedge
    ((x-y)z'=1)$.  Clearly, this conjunction is satisfiable iff $x >
    y$ is satisfiable.  Next, for every sub-formula of the form $x\geq
    y$ in $\psi_i$, we introduce four fresh rational variables $z_1,
    z_2, z_3, z_4$, and then replace this inequality by the equation
    $x-y=z_1^2+z_2^2+z_3^2+z_4^2$. Clearly if this equation is
    satisfied, then $x-y$ must be non-negative. By Lagrange's
    four-square theorem~\cite{Ireland1982ACI}, we also know that every
    non-negative rational is the sum of four rational squares.  So
    there are always solutions for the new variables as long as
    $x-y\geq 0$ holds.  Let $\ZZ$ denote the set of all fresh rational
    variables introduced above. Then, the above steps give us a
    conjunction of equalities, say $\psi_i'(\XX, \YY, \ZZ)$, such that
    $\psi_i(\XX,\YY)\equiv \exists \ZZ \psi_i'(\XX,\YY,\ZZ)$. Finally,
    we convert $\psi_i'$ -- a conjunction of polynomial equations --
    into a single equation. This can be easily done by observing that
    for any two polynomial equations $p=0$ and $q=0$, the formula
    $(p=0)\wedge (q=0)$ is equivalent to $p^2+q^2=0$. Thus, after the
    above sequence of steps, we obtain a single polynomial equation,
    say $\psi_i''(\XX,\YY,\ZZ)$, such that $\psi_i(\XX,\YY) \equiv
    \exists \ZZ \psi_i''(\XX,\YY,\ZZ)$ holds.  Note that
    $\psi_i''(\XX,\YY,\ZZ)$ being a single polynomial equation is also
    an instance of $\HTP$.
    
    Now, for any input $\AAA\in \rat^{|\XX|}$, if $\HTP$ on
    $\psi_i''(\AAA,\YY,\ZZ)$ answers ``No'' for all $i \in \{1, \ldots
    k\}$, then no $\YY \in \mathbb{Q}^{|\YY|}$ can satisfy
    $\varphi(\AAA,\YY) \equiv \bigvee_{i=1}^k\psi_i(\AAA,\YY)$; hence a
    program realizing $\varphi(\XX,\YY)$ must return
    $\bot$. Otherwise, if $\HTP$ returns ``Yes'' for some
    $\psi_i''(\AAA,\YY,\ZZ)$, then there exists
    $\BBB\in\rat^{|\YY|+|\ZZ|}$ s.t $\psi(\AAA,\BBB)$ holds. Since
    this is a rational vector, we can now find it by enumeration.
    This gives us a terminating program that realizes
    $\varphi(\XX,\YY)$. \qed
  \end{proof}

 The above proof requires enumeration in the final step, and therefore
 does not guarantee that the synthesized program is loop-free, even if
 we had an $\HTP$ oracle.  But there could be alternative ways of
 solving $\prob$.  We therefore ask: Is it possible to solve $\prob$
 using only loop-free programs? The answer turns out to be in the
 negative.

\begin{restatable}{theorem}{thmnoloopfree}
  \label{thm:no-loop-free}
$\prob$ cannot be solved using loop-free programs.  
\end{restatable}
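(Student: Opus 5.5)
We exhibit one concrete specification that no loop-free program realizes. Take one input $x$ and one output $y$, with $\varphi(x,y) \equiv (y^2 = x)$, written as $(y^2 - x \ge 0) \wedge (x - y^2 \ge 0)$. A realizing program must return a rational $y$ with $y^2 = a$ whenever $a \in \rat$ is the square of a rational, and must return $\bot$ otherwise. Suppose $\prog(x)$ is a loop-free program in $\lfgram$ realizing $\varphi$.

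The first step is a normal form for loop-free programs. By structural induction on $\lfgram$ we show that $\prog$ computes a piecewise-polynomial function. Concretely, there are finitely many pairs $(C_j, \mathbf{p}_j)$ such that the $C_j(x)$ are polynomial constraints in $x$ partitioning $\rat$, and on inputs satisfying $C_j$ the program returns either $\bot$ or the tuple $\mathbf{p}_j(x)$ of polynomials in $x$ with rational coefficients. The induction works as follows. An assignment substitutes a polynomial into later terms. A conditional splits each current path by the guard with current values substituted, and the guard remains a polynomial constraint in $x$. Sequencing composes these. Since there are no loops, there are finitely many paths. The returned value of $y$ on path $j$ is therefore a polynomial $p_j(x) \in \rat[x]$.

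The second step is to exploit finiteness over a one-dimensional input space. Each $C_j$ is a Boolean combination of sign conditions on finitely many univariate polynomials. Hence each set $\{a \in \real : C_j(a)\}$ is a finite union of points and intervals. It follows that some $C_j$ contains an unbounded interval $(M,\infty)$ together with all rationals in it. On that region the program returns a fixed polynomial $p(x)$ or $\bot$. If it returns $\bot$, this already fails, since perfect squares $a = q^2 > M$ exist and the program must return an output on them.

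So the program returns $p(a)$ with $p(a)^2 = a$ for every rational square $a > M$. These are infinitely many points, so $p(x)^2 = x$ as polynomials. Comparing degrees, $2\deg p = 1$, which is impossible.

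The step needing the most care is the first. We must ensure that guards remain polynomial constraints after substitution, which holds because terms are only polynomials with no division. We must also handle paths that return $\bot$ or that return different variables. Both are absorbed into the finite case split.
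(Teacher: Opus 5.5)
Your proof is correct, and it takes a genuinely different route from the paper.

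The paper argues by a uniform reduction. If every specification $\varphi(\XX,\YY)$ had a loop-free realizer whose return leaves are polynomials $p_1,\ldots,p_r$, then the quantifier-free formula $\bigvee_k \varphi(\XX,p_k)$ would be equivalent over $\rat$ to $\exists \YY\,\varphi(\XX,\YY)$. So rational arithmetic would admit quantifier elimination, which the paper refutes by appealing to Robinson's theorem.

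You instead fix one explicit instance, $y^2 = x$, and refute it by elementary means:
\begin{enumerate}
\item loop-free programs are finite decision trees computing polynomials on pieces cut out by polynomial constraints;
\item with one input variable, some piece contains a whole tail $(M,\infty)$;
\item a single polynomial $p$ with $p(a)^2=a$ for infinitely many $a$ would force $p^2 = x$, which is impossible by degree.
\end{enumerate}
You could shorten the last step further: for any non-square rational $a>M$, the path returns the rational $p(a)$, which cannot satisfy $p(a)^2=a$.

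Your route has several advantages. It is self-contained and avoids the deep number theory. It also sidesteps a step the paper leaves implicit: turning ``admits quantifier elimination'' into a contradiction with Robinson needs either effectiveness of the elimination, or the definability of $\intgr$ in $\rat$ plus the fact that quantifier-free definable subsets of $\rat$ are finite unions of intervals and points. That last fact is essentially your interval argument. Finally, your counterexample already lies in the single-output fragment, so any complete procedure there, such as $\progone{\varphi}$, must produce programs with loops.

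The paper's route, in exchange, explains the phenomenon structurally rather than through a hand-picked counterexample. It ties loop-free solvability to quantifier elimination over $\rat$, and hence to the $\HTP$-hardness theme of that section.
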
  
The proof, as detailed in Appendix~\ref{app:hardness}, proceeds by showing that if $\prob$ can be solved using loop-free programs, then this would imply that the theory of rational arithmetic admits quantifier elimination, and hence decidability. However, by a famous result of Robinson~\cite{robinson1959undecidability}, we know that this is not the case.

%% file: algorithm.tex
\section{Algorithmic procedures for solving \prob}\label{sec:algorithm}

Given the hardness results above, a sound and complete algorithm for
solving the \prob\ problem in its full generality is unlikely. This
motivates us to (a) consider fragments of {\qfnra} for which we can
design sound and complete synthesis algorithms, and (b) design sound
but incomplete algorithms in general, that perform well in practice.
We start with an interesting observation.

\subsection{Solutions of \probreal\ are solutions of \prob}~\label{sec:nra-synth}

We show that every program derivable from grammar \gram\ (see
Fig.~\ref{fig:prog-grammar}) that realizes a specification
$\varphi(\inVar,\outvar)$ over {\em reals}, also realizes
$\varphi(\inVar,\outvar)$ over {\em rationals}.  This immediately
suggests the following result.

\begin{restatable}{theorem}{thmnrasynthesizable}
  \label{thm:nra-synthesizable}
  Let $\mathcal{L}$ be any fragment of {\qfnra} for which there is a
  sound and complete algorithm ${\mathcal A}_{\mathcal L}$ for solving
  {\probreal}, i.e. synthesis over reals using grammar \gram.  Then
  ${\mathcal A}_{\mathcal L}$ also solves {\prob}, i.e. synthesis over
  rationals using \gram, for fragment $\mathcal{L}$.
  Moreover, there exists a fragment of \qfnra\ for which there does
  not exist any sound and complete algorithm for solving \probreal,
  but for which there is a sound and complete algorithm for solving
  \prob.
\end{restatable}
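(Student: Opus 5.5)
The plan has two parts. Both rest on one elementary observation about the grammar $\gram$: every assignment evaluates a polynomial term with rational coefficients, every conditional and loop guard tests a polynomial constraint, and $\rat$ is closed under addition and multiplication. The first step is therefore to prove by induction on the length of an execution (equivalently, on the structure of $\prog$ together with the number of loop iterations) that the following holds. If $\prog$ is run on an input $\AAA \in \rat^{|\XX|}$, every variable holds a rational value at every step. Moreover, the run coincides step for step with the same run viewed as an execution over $\real$, since the semantics of $\gram$ does not depend on whether the inputs are regarded as rationals or reals. In particular, any tuple returned on rational input lies in $\rat^{|\YY|}$.

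For the first claim, let $\prog$ be the output of ${\mathcal A}_{\mathcal L}$ on some $\varphi \in \mathcal{L}$, so $\prog$ realizes $\varphi$ over $\real$. Fix $\AAA \in \rat^{|\XX|} \subseteq \real^{|\XX|}$.
\begin{enumerate}
\item \emph{Termination.} $\prog$ terminates on $\AAA$, because it terminates on every real input.
\item \emph{Tuple returned.} If $\prog$ returns a tuple $\BBB$, then $\varphi(\AAA,\BBB)$ holds by real realizability, and $\BBB \in \rat^{|\YY|}$ by the observation above.
\item \emph{$\bot$ returned.} If $\prog$ returns $\bot$, then by real realizability no $\BBB \in \real^{|\YY|}$ satisfies $\varphi(\AAA,\BBB)$. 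A fortiori no rational $\BBB$ does, so $\bot$ is a correct answer in the sense of $\prob$.
\end{enumerate}
Hence $\prog$ realizes $\varphi$ over $\rat$, and ${\mathcal A}_{\mathcal L}$ solves $\prob$ on $\mathcal L$.

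For the second claim I would exhibit an explicit fragment. The natural witness is $\mathcal{L}_0 = \{\varphi_0\}$ with $\varphi_0(y) \equiv (y^2 - 2 \ge 0) \wedge (y^2 - 2 \le 0)$, where the input set is empty and there is a single output $y$. (A slightly larger fragment, such as all specifications whose real solution sets consist only of irrationals, would work equally well.)
\begin{itemize}
\item \emph{Over $\real$.} A solution exists, namely $y=\sqrt2$. So any program realizing $\varphi_0$ over $\real$ must terminate and return a tuple. By the observation, that tuple is rational, and a rational $y$ cannot satisfy $y^2 = 2$. Hence no program in $\gram$ realizes $\varphi_0$ over $\real$, and in particular no sound and complete algorithm for $\probreal$ on $\mathcal L_0$ exists.
\item \emph{Over $\rat$.} The program ``\textbf{return} $\bot$'' realizes $\varphi_0$, because $\sqrt2 \notin \rat$. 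The algorithm that always outputs this program is therefore sound and complete for $\prob$ on $\mathcal L_0$.
\end{itemize}

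The main obstacle is making the ``same execution'' argument precise: the real semantics and the rational semantics of $\gram$ must produce identical traces on rational inputs, including loop guards and the final return. This is routine once the semantics is fixed, but it is the step where care is needed. A secondary point is interpretive. One could read ``no sound and complete algorithm'' as ruling out only algorithms, not programs. The example sidesteps this, since no realizing program exists at all, so no algorithm can output one.
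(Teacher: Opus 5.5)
Your proposal is correct. The first part matches the paper's proof: rational inputs, rational constants in $\gram$, and closure of $\rat$ under polynomial operations force rational outputs, while termination and the correctness of $\bot$ transfer from $\real$ to $\rat \subset \real$. You make the induction over executions explicit, which the paper leaves implicit.

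For the second part you use a different witness. The paper takes $\mathcal{L}=\{y^2=x\}$ with input $x$. Its real side fails at $x=2$ by the same closure argument you use. Its rational side is handled by appealing to the single-output procedure of Section~\ref{sec:single-out} (Theorem~\ref{thm:qfnqa-one}).

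Your input-free specification $y^2-2\ge 0 \wedge y^2-2\le 0$ makes the rational side trivial, since the constant program \textbf{return} $\bot$ realizes it. This has two advantages:
\begin{itemize}
\item Your separation is self-contained. It does not depend on the correctness of the RRI/RRT construction or on that construction being expressible in $\gram$, both of which the paper only sketches.
\item Writing the equation as two non-strict inequalities keeps you within the paper's own syntax for $\qfnra$, which has no equality atoms.
\end{itemize}

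What the paper's witness buys is a less degenerate separation. Its rational side is a genuinely parametric synthesis problem: for each rational $x$, decide whether $x$ is a rational square and, if so, produce a root. It is not settled by a constant answer, so it shows that $\prob$ can be solvable in a nontrivial way on a fragment where $\probreal$ is unsolvable.
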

We defer the proof to Appendix~\ref{sec:app-complete}.
A consequence of Theorem~\ref{thm:nra-synthesizable} is that whenever a {\sygus} tool synthesizes a program {\prog}$(\XX)$ for a (realizable) specification $\varphi(\XX,\YY)$ \emph{over reals} using $\gram$ or any fragment of it, we can effectively use {\prog}$(\XX)$ as a solution for the
$\prob$ problem.  If $\varphi(\XX,\YY)$ is not realizable, we can
first compute the weakest pre-condition $\psi(\XX) \equiv \exists
\YY\, \varphi(\XX,\YY)$ using a quantifier elimination algorithm for
reals.  If a {\sygus} tool can then synthesize a program realizing
$\psi(\XX) \rightarrow \varphi(\XX,\YY)$ \emph{over reals}, then the
same program with a check for satisfaction of $\psi(\XX)$ also serves
as a solution for the $\prob$ problem for $\varphi(\XX,\YY)$.

\subsection{Complete synthesis for Single-Output Specifications}\label{sec:single-out}

We now propose a sound and complete synthesis procedure for solving
\prob\ when the specification has only one output variable.  Thus, the
specification is a formula $\varphi(\inVar, y)$ in $\qfnra$, where $y$
is the sole program output.  At a high level, the synthesis procedure
separates the reasoning into two cases: (i) finding solutions that
satisfy the constraints with only strict inequalities, and (ii)
finding solutions that arise only on the boundary of non-strict
inequalities. These cases are handled using two classical results from
number theory, namely \emph{Real Root Isolation} (RRI) and the
\emph{Rational Root Theorem} (RRT), along with quantifier elimination
from polynomial constraints using {\cad} or variant algorithms.  For
notational convenience, we refer to a generic quantifier elimination
procedure using {\cad} as \qepcad. Crucially, \qepcad\ is used only
while synthesizing the program; it is not invoked during execution of
the synthesized program.  We start with an overview of two key results
we rely on.

\paragraph{Real root isolation. (RRI)~\cite{RRI-Uspensky,Collins-RRI,RRI-latest}}
This refers to a family of algorithms that given a univariate
polynomial, compute a collection of intervals with rational end-points
on the real line, such that each interval contains exactly one real
root of the polynomial (if any exist).
The maximum size of an interval output by RRI, say $\varepsilon > 0$,
can be user-specified.  Using classical results on lower bounds of
root separation~\cite{basu,MinRootSep}, the value of $\varepsilon$ can
be chosen (or even iteratively reduced), so that the intervals for the
real roots obtained from RRI are pairwise disjoint.  Since the sign of
a polynomial is invariant between its real roots, gaps between the
disjoint intervals output by RRI give intervals of values where the
polynomial's sign cannot change.  Given a specification $\varphi$ that
is a conjunction of \emph{strict inequalities}, the above discussion
suggests the following procedure for computing values of $y$, given a
value of $\inVar$, such that $\varphi$ is satisfied. We first
substitute the given value of $\inVar$ in every polynomial
$p(\inVar,y)$ that appears in a constraint of the form $p(\inVar,y)
\bowtie 0 ~(\bowtie \in \{<, >\})$ in $\varphi(\inVar,y)$. This
effectively gives us a set of uni-variate polynomials in $y$.  Next,
we construct the product of all these uni-variate polynomials, and
apply RRI with an appropriate value of $\epsilon$ to obtain the
interval gaps where the product polynomial's sign is invariant.
Clearly, none of the factor polynomials change sign in any of these
intervals. By evaluating each factor polynomial at the mid-point of
each such gap interval, we can find a rational value of $y$ that
satisfies $\varphi(\inVar,y)$, if one exists.

The above argument relies on the fact that the rationals are dense in
$\mathbb{R}$. Specifically, after substituting a rational value for
$\XX$, if $r_1$ and $r_2$ are rationals such that every
real $y$ in $r_1 < y < r_2$ satisfies $\chi(\XX, y)$, then there exist
rationals between $r_1$ and $r_2$ (for instance,
$\frac{r_1+r_2}{2}$) that also satisfy $\chi(\XX, y)$.
This immediately implies that if a rational value of $\XX$ satisfies
$\exists y\, \chi(\XX,y)$ over the reals, then the same value of
$\XX$ also satisfies $\exists y\, \chi(\XX, y)$ over rationals.
We exploit this property in detecting when $\exists y\, \varphi(\XX, y)$
has a solution over rationals.

\paragraph{Rational root theorem (RRT)~\cite{rrt}.} 
The second result that we need is the rational root theorem, which can
be formulated as follows. 
\begin{theorem}[Rational Root Theorem~\cite{rrt}]\label{thm:rrt}
  Let $a_nx^n + a_{n-1}x^{n-1} + \cdots + a_1x + a_0 = 0$ be a
  polynomial equation, where $a_0, \ldots a_n \in \intgr$ and $a_0,
  a_n \neq 0$.  Then every rational solution $x = \frac{u}{v}$
  of the equation, where $u$ and $v$ are relatively prime, satisfies
  the conditions: (i) $u$ is an integer factor of $a_0$, and (ii)
  $v$ is an integer factor of $a_n$.
\end{theorem}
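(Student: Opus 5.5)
The plan is the classical argument: substitute the candidate root, clear denominators, and use coprimality. Let $x = \frac{u}{v}$ with $\gcd(u,v)=1$ and $v \neq 0$ be a rational solution. Substitute it into the equation and multiply through by $v^n$. This gives the integer identity
\[
a_n u^n + a_{n-1}u^{n-1}v + \cdots + a_1 u v^{n-1} + a_0 v^n = 0 .
\]
Both divisibility claims follow by isolating one extreme term and reading off divisibility from the remaining terms.

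For (i), move $a_0 v^n$ to one side:
\[
a_0 v^n = -u\,(a_n u^{n-1} + a_{n-1}u^{n-2}v + \cdots + a_1 v^{n-1}).
\]
So $u \mid a_0 v^n$. Since $\gcd(u,v)=1$, we also have $\gcd(u,v^n)=1$, and Euclid's lemma (or Gauss's lemma on divisibility) gives $u \mid a_0$.

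Note that $u \neq 0$ here. Otherwise $x=0$ would be a root, forcing $a_0 = 0$, which contradicts the hypothesis. So $u$ is a genuine nonzero integer factor of $a_0$.

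For (ii), argue symmetrically by isolating $a_n u^n$:
\[
a_n u^n = -v\,(a_{n-1}u^{n-1} + \cdots + a_0 v^{n-1}).
\]
Hence $v \mid a_n u^n$. Coprimality of $v$ and $u^n$ then gives $v \mid a_n$.

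The only step requiring care is the passage from $\gcd(u,v)=1$ to $\gcd(u,v^n)=1$, and the subsequent cancellation. I would justify this via unique prime factorization: any prime dividing $u$ and $v^n$ divides $v$, which is a contradiction. After that, Euclid's lemma completes the argument. No result from earlier in the paper is needed, since this is self-contained elementary number theory.
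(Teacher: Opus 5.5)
Your proof is correct: clearing denominators by $v^n$, isolating the extreme terms $a_0v^n$ and $a_nu^n$, and applying Euclid's lemma with $\gcd(u,v^n)=\gcd(v,u^n)=1$ is the classical argument, and you also handle the degenerate case $u=0$ using $a_0\neq 0$. The paper gives no proof of its own and simply cites the theorem as a standard result, so your argument is the textbook proof that the citation refers to.
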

An immediate consequence of Theorem~\ref{thm:rrt} is that every
rational solution of the polynomial equation lies in the set
$\{\frac{u}{v} \mid u \in \intfactor(a_0), v \in \intfactor(a_n)\}$,
where $\intfactor(m)$ is the set of integer factors of $m \in
\nat$. Since each natural number has finitely many divisors, the set
of candidate rational solutions is finite. 

We use these crucial observations to synthesize a program $\progone{\varphi}(\inVar)$ that realizes $\varphi(\inVar, y)$.  Though the polynomials in our specification may use rational coefficients, an inequality involving such a polynomial can be converted to an equivalent inequality with only integer
coefficients by multiplying both sides of the inequality by the least
common multiple of all denominators of rational coefficients. Hence
Theorem~\ref{thm:rrt} applies to polynomial equations with rational coefficients as well, i.e., our specification class. 

Recall that the formula $\varphi(\inVar,y)$ may contain both strict and non-strict polynomial inequalities. 
Let $\nstrict(\varphi)$ denote the set of polynomial terms $p$ that contain the output variable $y$ and occur in non-strict inequalities of the form $p \geq 0$ or $p \leq 0$ in $\varphi$.
From $\varphi(\inVar,y)$, we can always construct the formula $\widehat{\varphi}(\inVar, y)$ obtained by converting every non-strict inequality in $\nstrict(\varphi)$ to a strict inequality.  In other words, for every $p \in  \nstrict(\varphi)$, if $p \le 0$ (resp. $p \ge 0$) is a polynomial  inequality in $\varphi$, we replace all instances of this inequality in $\varphi$ by $p < 0$ (resp. $p > 0$) to get  $\widehat{\varphi}(\inVar,y)$. We use \qepcad\  to obtain $\psi(\inVar)$ by eliminating quantifiers from  $\exists y\;\widehat{\varphi}(\inVar,y)$. The pseudocode of the program $\progone{\varphi}(\inVar)$ that realizes $\varphi(\inVar, y)$ in $\qfnra$ is given in Algorithm~\ref{algo:qfnqaone}.

\begin{algorithm}[!htbp]
  \caption{Program realizing $\varphi(\inVar,y)$ with single output}
  \label{algo:qfnqaone}
  \begin{algorithmic}[1]
    \Require Formulas $\varphi(\inVar,y)$, $\widehat{\varphi}(\inVar, y)$ and $\psi(\inVar)$ in {\qfnra}
    \Statex \Comment $\widehat{\varphi}$ (computed off-line) is obtained from $\varphi$ converting every non-strict inequality involving $y$ to strict.~~
    \Statex \Comment $\psi(\inVar)$ (computed off-line using {\qepcad}) is equivalent to  $\exists y\;\widehat{\varphi}(\inVar,y)$ ~~~~~~~~~~~~~~~
    \Procedure{$\progone{\varphi}$}{$\inVar$}
    \If{$\psi(\inVar) = \true$}
    \State $\Gamma_{\widehat{\varphi}}$ $\gets$ {\algqfnras}$\big(\widehat{\varphi}(\XX,y)\big)$
    \Comment $\widehat{\varphi}(\XX,y)$ is univariate in $y$, given $\XX$
    \ForAll {$a$ in $\Gamma_{\widehat{\varphi}}$}
    \Comment Every $a \in \Gamma_{\widehat{\varphi}}$ is a rational number
    \If{$\widehat{\varphi}(\inVar, a) ~=~ \mathsf{true}$}
       \Return $a$
       \Comment $\widehat{\varphi}(\inVar,a)$ implies $\varphi(\inVar,a)$
    \EndIf
    \EndFor
    \Else
    \ForAll {$p$ in $\nstrict(\varphi)$}
    \State $\mathsf{Cand} \gets \candsol(p = 0)$
    \ForAll {$a$ in $\mathsf{Cand}$}
    \If{$\varphi(\inVar, a)$ is $\true$}
    \Return $a$
    \EndIf
    \EndFor
    \EndFor
    \Return $\bot$
    \EndIf
    \EndProcedure
  \end{algorithmic}
\end{algorithm}
This algorithm uses a function ${\candsol}(\cdot)$ that takes as
argument a polynomial equation and returns a finite set of candidate
rational roots of the equation, as described above. Further the
algorithm uses the function {\algqfnras} which, given a formula
$\widehat{\varphi}$ on strict polynomial inequalities, returns a
finite set of rational points, such that if there is any rational
solution to $\widehat{\varphi}$, one of the returned points must
satisfy it.  This is implemented using the mid-points/end-points of
gap-intervals computed by
RRI~\cite{RRI-Uspensky,Collins-RRI,RRI-latest}. There are two cases
to consider.
\begin{itemize}
\item For a given value of $\inVar$, if $\psi(\inVar)$ evaluates to
  $\true$, we know that there is a real (and hence rational) value of
  $y$ that satisfies $\widehat{\varphi}$ (with only strict
  inequalities).  So, we first obtain the uni-variate constraint
  $\varphi'(y)$ by substituting the value of $\inVar$ in
  $\widehat{\varphi}(\inVar,y)$. Then we apply RRI on $\varphi'(y)$ to
  obtain rational points in the gap-intervals computed by RRI (see
  lines 2--3 of Algorithm~\ref{algo:qfnqaone}). Next, we check if any
  of these points satisfies $\widehat{\varphi}$, and if so
  $\progone{\varphi}$ outputs it.
  \item If $\psi(\inVar)$ evaluates to $\false$ for the given value of
    $\inVar$, then we know that there is no rational value of $y$ that
    satisfies $\widehat{\varphi}(\inVar, y)$.
  However, there may exist a rational value of $y$ that satisfies
  $\varphi(\inVar, y)$ (that includes non-strict inequalities),
  although it doesn't satisfy $\widehat{\varphi}(\inVar, y)$.  For
  this to be true, the corresponding values of $\inVar$ and $y$ must
  necessarily satisfy $p = 0$ for at least one polynomial $p \in
  \nstrict(\varphi)$. We can now apply Theorem~\ref{thm:rrt} to each
  polynomial equation $p = 0$, where $p \in
  \nstrict(\varphi)$. Effectively, we enumerate the finite set of
  candidate rational solutions of $p=0$, and check whether any of
  these satisfies $\varphi(\inVar, y)$.  If so, the corresponding
  rational solution is output by $\progone{\varphi}(\inVar)$.
  Otherwise, $\progone{\varphi}(\inVar)$ outputs $\bot$.
\end{itemize}
The proof of the following theorem follows from the construction and
reasoning detailed above.
\begin{restatable}{theorem}{thmqfnqaone}
  \label{thm:qfnqa-one}
  Given a specification $\varphi(\inVar,y)$ in {\qfnra} with a single output variable, the program $\progone{\varphi}(\inVar)$ realizes $\varphi(\inVar,y)$, thus solving $\prob$.
\end{restatable}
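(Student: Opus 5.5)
We argue soundness, completeness and termination of $\progone{\varphi}$ separately, fixing an arbitrary $\AAA \in \rat^{|\inVar|}$. Termination is immediate: $\psi(\AAA)$ is a quantifier-free polynomial constraint that can be evaluated exactly, {\algqfnras} returns a finite set, $\nstrict(\varphi)$ is finite, and each $\candsol(p=0)$ is finite by Theorem~\ref{thm:rrt}. One caveat is needed: if $p(\AAA,y)$ has zero constant term, we first divide out the largest power of $y$ and add $0$ as a candidate. If $p(\AAA,y)$ is identically zero, every $y$ is a root, but this case is handled below. Soundness is also immediate. Every value returned is checked before being returned, either against $\widehat{\varphi}(\AAA,a)$ or against $\varphi(\AAA,a)$. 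Since each strict inequality $p<0$ implies $p\le 0$, and each is substituted in place of its non-strict counterpart, $\widehat{\varphi}$ implies $\varphi$, so we assume $\varphi$ is in negation normal form with atoms $p \bowtie 0$. Hence any returned $a$ satisfies $\varphi(\AAA,a)$.

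For completeness, suppose some $b\in\rat$ satisfies $\varphi(\AAA,b)$; we must show the program does not return $\bot$. We split on $\psi(\AAA)$.

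\textbf{Case $\psi(\AAA)$ true.} Then some real $y$ satisfies $\widehat{\varphi}(\AAA,y)$. Let $q(y)$ be the product of all nonzero univariate polynomials obtained from the atoms of $\widehat{\varphi}$ after substituting $\AAA$. On each open interval between consecutive real roots of $q$, and on the two unbounded intervals, every atom has constant sign, so $\widehat{\varphi}$ has a constant truth value there. Since $\widehat{\varphi}$ is a Boolean combination of strict inequalities, its solution set in $\real$ is open, so it cannot consist solely of roots of $q$. Therefore some whole gap interval satisfies $\widehat{\varphi}$. RRI with $\varepsilon$ chosen below the root-separation bound yields disjoint rational isolating intervals. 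We must make sure {\algqfnras} returns a rational point strictly inside every gap, including points beyond the outermost intervals, for instance the outer endpoint $\pm 1$. The mid-points between consecutive isolating intervals do lie in the true gaps. The loop then finds such a point, so the program returns something (a sound value). Note that in this branch the witness $b$ is not even needed.

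\textbf{Case $\psi(\AAA)$ false.} No real, hence no rational, $y$ satisfies $\widehat{\varphi}(\AAA,y)$, so $\widehat{\varphi}(\AAA,b)$ is false while $\varphi(\AAA,b)$ is true. Since $\widehat{\varphi}$ differs from $\varphi$ only in replacing atoms $p\bowtie 0$ with $p\in\nstrict(\varphi)$ by their strict versions, and $\varphi$ is monotone in its atoms (negation normal form), some such atom must be true under $\varphi$ and false under $\widehat{\varphi}$ at $b$. That is, $p(\AAA,b)=0$ for some $p\in\nstrict(\varphi)$. If $p(\AAA,y)\not\equiv 0$, then $b$ lies in $\candsol(p(\AAA,y)=0)$ after clearing denominators, by Theorem~\ref{thm:rrt}, and the loop returns some candidate satisfying $\varphi$.

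The main obstacle is the degenerate subcase $p(\AAA,y)\equiv 0$. In that subcase the atom holds for all $y$, and I would argue that it then cannot be the distinguishing atom. The atom evaluates identically to its strict version on an open set of $y$'s, and the real solution set of $\varphi(\AAA,\cdot)$ minus the finitely many roots of the other nonzero polynomials is open. This argument is delicate, so as a fallback I would handle $p(\AAA,y)\equiv 0$ in the program itself. The program would substitute $\true$ for that atom and recurse on the simplified formula, which has strictly fewer atoms in $\nstrict$; this amounts to an induction on $|\nstrict(\varphi)|$. I would likewise make precise the normal-form assumption under which "replacing by strict" weakens nothing, since the argument above relies on it.
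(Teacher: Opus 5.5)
Your decomposition matches the paper's. Its proof is just the observation that a rational solution either satisfies $\widehat{\varphi}$, and is then caught by the RRI sample points, or is a root of some $p\in\nstrict(\varphi)$, and is then caught by Theorem~\ref{thm:rrt}. Your soundness, termination and two-case argument flesh this out correctly, with useful care about negation normal form, discarding zero factors before RRI, and the $a_0=0$ case. This holds as long as no $p\in\nstrict(\varphi)$ vanishes identically at $\AAA$.

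That degenerate subcase is a genuine gap, and your primary argument for it is false. If $p(\AAA,\cdot)\equiv 0$, the atom $p\ge 0$ holds at every $y$ and $p>0$ at none. So the two versions never agree, and this atom can be the only distinguishing one. Take $\varphi(x,y):=(xy\ge 0)\wedge(y>1)$:
\begin{itemize}
\item Here $\nstrict(\varphi)=\{xy\}$ and $\psi(x)\equiv x>0$.
\item At $x=0$ the program takes the $\psi$-false branch, although $\varphi(0,2)$ holds.
\item The only equation handed to $\candsol$ is then $0=0$. Theorem~\ref{thm:rrt} needs $a_n\neq 0$, and $\intfactor(0)$ is infinite.
\item Any fixed finite candidate set for the zero polynomial fails once $1$ is replaced by a constant exceeding all its elements.
\item Enumerating all of $\rat$ instead destroys termination; add the conjunct $y<1$.
\end{itemize}

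So this case cannot be proved for $\progone{\varphi}$ as written; the paper's two-line proof passes over it as well. Your fallback is therefore not optional: state up front that you prove correctness of a modified program. Your recursion on $|\nstrict(\varphi)|$ works if the finitely many required $\psi$'s are precomputed offline.

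A simpler repair is to also test, in the $\psi$-false branch, the {\algqfnras} sample points of the polynomials that are not identically zero at $\AAA$. Make strict only those atoms of $\nstrict(\varphi)$ whose polynomial is nonzero at $\AAA$. The resulting formula has an open solution set and implies $\varphi(\AAA,\cdot)$. So either a sample point satisfies it, or the witness is a root of some $p\in\nstrict(\varphi)$ with $p(\AAA,\cdot)\not\equiv 0$, which RRT then finds.
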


\begin{example}
  Consider the specification $0.9 \le x^2 + y^2 \le 1$, where $x$ is a real-valued input and $y$ is a real-valued output. 
  
  Note that here, $\nstrict(\varphi) = \{x^2 + y^2 - 0.9, x^2 + y^2 - 1\}$. 
  Algorithm~\ref{algo:qfnqaone} first computes $\widehat{\varphi}$ as $x^2 + y^2 - 0.9 > 0 \wedge x^2 + y^2 - 1 < 0$.
  Then \qepcad\ computes $\psi(x)$ as $\exists y\ \widehat{\varphi}(x,y)$, which is equivalent to $x^2 < 1$, i.e., $-1 < x < 1$.
  Thus, if we pick an $x$ from the interval $(-1,1)$, we know that there is a rational $y$ such that $\widehat{\varphi}(x,y)$ is satisfied.  
  Algorithm~\ref{algo:qfnqaone} then calls {\algqfnras} on the univariate constraint in $y$, obtained by substituting the value of $x$ in $\widehat{\varphi}(x,y)$.  This corresponds to the lines 2-5 of Algorithm~\ref{algo:qfnqaone}. 
  
  Now, if $\psi(x)$ evaluates to $\false$, Algorithm~\ref{algo:qfnqaone} considers the rational solutions of the polynomial equations obtained from by considering $p = 0$ for at least one polynomial $p \in
  \nstrict(\varphi)$, which yields $x^2 + y^2 - 0.9 = 0$ and $x^2 + y^2 - 1 = 0$.
  When $x = 1$ or $x = -1$, the second equation gives $y = 0$.
  Note that in this case, the strict formula $\widehat{\varphi}$ is not satisfiable, but the original formula $\varphi$ is satisfiable by $y = 0$. 
  This corresponds to the lines 7-10 of Algorithm~\ref{algo:qfnqaone}, where we consider the candidate rational solutions of the polynomial equations obtained from $\nstrict(\varphi)$.   
\end{example}

\subsection{Synthesis for full {\qfnqa}}~\label{sec:general}
In this section, we present a sound but incomplete procedure that
solves $\prob$ for the general case.
Given the hardness results in Section~\ref{sec:hardness}, we limit
ourselves to providing a sound procedure that nevertheless solves a
large collections of benchmarks (including unrealizable
specifications), as demonstrated by our experiments.

A naive way to solve the multiple output case would be to try to
process it one output at a time by existentially quantifying the
others. However, as explained earlier, existential quantification over
rationals is at least as hard as $\HTP$, and hence we cannot hope for
a practically efficient procedure. Instead, we propose an alternative
approach that first considers a (rational) point that satisfies the
given specification, and then uses this to decompose the synthesis
problem to a collection of single output synthesis sub-problems.  The
sound and complete procedure for the one output case described in
Section~\ref{sec:single-out} is used to solve these sub-problems
individually, and the solutions composed to incrementally synthesize
the desired program. The process is repeated until the weakest
pre-condition of the given specification is covered, or until the
algorithm times out. The main difficulty of this approach is to
maintain rationality (and correctness) of solutions, while ensuring
progress at each iteration.

\begin{algorithm}[t]
  \caption{Program for $\varphi$ in {\qfnra} solving \prob}
  \label{algo:prog-synth}
  \scriptsize	  
  \begin{algorithmic}[1]
    \Require \qfnra\ specification $\varphi(\inVar,\outvar)$, where $\outvar=\{y_1,\cdots,y_n\}$, and integer $\cutoff$.
    \Procedure{$\algqfnqa$}{$\varphi,\inVar,\outvar$}
    \State $\step \gets 0$ 
      \State $\sigma \gets \mathsf{RSolve}(\varphi,\inVar,\outvar)$ \Comment{$\sigma=(\sigma_1,\ldots,\sigma_n)\in \rat^{|\YY|}$ or $\bot$}
      \While{$(\sigma \neq \bot) \wedge (\step \leq \cutoff)$}\label{line:while} 
        \For{$i$ in $\{1,\cdots,|\outvar|\}$}\label{line:for} 
        \State $\varphi_i(\XX,y_i)\gets \varphi(\XX, \sigma_1,\ldots,\sigma_{i-1},y_i,\sigma_{i+1},\ldots \sigma_n)$
        \State $\psi_i(\XX)\gets \exists y_i \varphi_i(\XX,y_i)$
        \Comment Quantifier elimination done using \qepcad\
        \State Let $\rprog{i}(\XX)$ be the program as follows:
        \Statex \hspace*{1cm}\colorbox{lightgray}{\rprog{i}\begin{minipage}{0.3\linewidth}
          \hspace{\algorithmicindent}
          \begin{tabbing}
            \hspace{1cm}\=\hspace{1cm}\=\kill
            \hspace{2cm} $(y_1,\ldots, y_{i-1},y_{i+1},\ldots, y_n)\gets (\sigma_1, \ldots \sigma_{i-1},\sigma_{i+1},\ldots, \sigma_n)$\\
            \hspace{2cm} $y_i\gets \progone{{\varphi_i}}$
          \end{tabbing}
          \end{minipage}
          }
        \label{line:prog}   
        \EndFor
        \State $\varphi(\XX,\YY) \gets \varphi(\inVar,\outvar) \wedge \neg \psi_1(\inVar)\land \neg \psi_2(\inVar)\land \ldots \land \neg \psi_n(\inVar)$\label{line:newspec}
         \State $\sigma \gets \mathsf{RSolve}(\varphi,\inVar,\outvar)$
        \State $\step \gets \step + 1$
        \EndWhile    
        \State \Return $\proggen{\varphi}$ defined as:
        \Statex \hspace*{2cm}\colorbox{lightgray}{\proggen{\varphi}\begin{minipage}{0.4\linewidth}
          \hspace{\algorithmicindent}
          \begin{tabbing}
            \hspace{1cm}\=\hspace{1cm}\=\kill
            \hspace{2cm} \textbf{if} $\psi_1(\XX)$ \textbf{then} $\rprog{1}(\XX)$\\
            \hspace{2cm} \textbf{else if} $\psi_2(\XX)$ \textbf{then} $\rprog{2}(\XX)$\\
            \hspace{3cm} $ \ldots$\\
            \hspace{2cm} \textbf{else if} $\psi_n(\XX)$ \textbf{then} $\rprog{n}(\XX)$\\
            \hspace{2cm} \textbf{else} $\bot$
          \end{tabbing}
          \end{minipage}
          }
    \EndProcedure
  \end{algorithmic}
\end{algorithm}

The pseudo-code for our technique is given in
Algorithm~\ref{algo:prog-synth}, where snippets of the program being
synthesized are represented in grey boxes.  These are not executed
during the run of the synthesis algorithm.
We start by querying an {\qfnra} oracle for a satisfying assignment
(or solution) for $\varphi(\XX, \YY)$. The function $\mathsf{RSolve}$
returns a rational satisfying assignment projected to the output
variables, if there exists such a valuation; otherwise it returns
$\bot$. 
Given the rational vector of values $\sigma$, for each $y_i$ in $\YY$, we now consider the formula $\varphi_i(\XX,y_i)$ obtained by fixing rational values for all output variables other than $y_i$. This specification is now a single output specification, and we use the procedure described in Section~\ref{sec:single-out} to obtain the program $\progone{\varphi_i}$. 
Then by substituting the values obtained from $\sigma$ for all output variables other than $y_i$, we obtain $\rprog{i}(\XX)$, where for the $y_i^{th}$ output, we will use $\progone{\varphi_i}$.

Let $\psi_i$ be the weakest pre-condition (computed using \qepcad) for the
one output synthesis problem in the $i^{th}$ iteration.  When we quit the for loop in line 5, we conjunct $\neg \psi_i$ with $\varphi$, effectively
obtaining a new specification $\varphi$ (in line~\ref{line:newspec}) which has strictly fewer satisfying assignments, thus ensuring progress.

The while loop terminates by either a \emph{time out}, i.e., if $(\step \leq \cutoff)$ in line~\ref{line:while} is violated, or by running out of rational solutions to $\varphi$. At termination of the program, we return the program that has the form of a decision list,  which results in a sound program.
In addition, we can also provide a guarantee of partial completeness of the procedure.  That is, if the procedure terminates before timing out, then we know that the there is no way to satisfy the specification while violating all the (partial) pre-conditions obtained so far.  Hence, the program synthesized by the procedure realizes the post-condition.

In general, a query to the {\qfnra} solver in line 3 of
Algorithm~\ref{algo:prog-synth} may not return a vector of all
rational values.  If the $i^{th}$ component of the returned vector is
not rational, we skip the $i^{th}$ iteration of the for loop in line
5.  If no component of the returned vector is rational, we query the
solver again until we get a rational solution.

The following guarantee about our algorithm follows from the construction and discussion above.
\begin{theorem}\label{thm:qfnqa-soundness}
  Algorithm~\ref{algo:prog-synth} accepts as input specification $\varphi(\inVar,\outvar)$ and terminates synthesizing a program such that (1) if $\prog
  (\inVar)$ returns $\BBB \in \rat^{\outvar}$, then $\varphi(\inVar, \outvar \mapsto \BBB)$ holds, and (2) if the algorithm terminates before timeout, then $\prog(\inVar)$ solves $\prob$ by realizing specification $\varphi(\inVar,\outvar)$.
\end{theorem}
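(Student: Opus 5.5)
We argue termination and the two claims separately, taking the correctness of the single-output construction from Theorem~\ref{thm:qfnqa-one} as given. Throughout, $\varphi_0$ denotes the original specification. The symbol $\varphi$ denotes the current specification, which is strengthened in line~\ref{line:newspec}.

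\emph{Termination.} Each iteration of the while loop is finite. The $\mathsf{RSolve}$ query is a call to an NRA oracle. There are $|\outvar|$ invocations of \qepcad, each of which terminates. Building $\rprog{i}$ is purely syntactic. The counter $\step$ increases by one per iteration, so the guard $\step \le \cutoff$ fails after at most $\cutoff+1$ iterations, and the algorithm always returns a program. For the skipping rule applied to irrational solver answers, we read the re-querying step as bounded by the same timeout. The synthesized program itself terminates because every $\rprog{i}$ is a sequence of assignments followed by one call to $\progone{\varphi_i}$. That call terminates because RRI and $\candsol$ produce finite candidate sets.

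\emph{Soundness (1).} The returned program is a decision list whose branches are the programs $\rprog{i}$ collected across all iterations. Suppose it returns $\BBB$ on input $\AAA$. Then some branch $\rprog{i}$ built in some iteration fired, with $y_j=\sigma_j$ for $j\ne i$ and $y_i=\progone{\varphi_i}(\AAA)\in\rat$. By Theorem~\ref{thm:qfnqa-one}, $\varphi_i(\AAA,y_i)$ holds, which means $\varphi(\AAA,\BBB)$ holds for the specification $\varphi$ current at that iteration. That specification has the form $\varphi_0\wedge\bigwedge\neg\psi_j$, so it implies $\varphi_0$. Hence $\varphi_0(\AAA,\BBB)$ holds. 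Rationality of the fixed values comes from using only rational components $\sigma_j$, since irrational components are skipped.

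\emph{Completeness on early exit (2).} This is the main obstacle. We must show that the decision list returns $\bot$ only when no rational output exists.

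First we need the guard $\psi_i(\AAA)$, which is computed over the reals, to be sound for firing $\rprog{i}$: whenever $\psi_i(\AAA)$ holds, $\progone{\varphi_i}(\AAA)$ must not return $\bot$. Here the plan is to exploit a real/rational gap. $\psi_i\equiv\exists y_i\,\varphi_i$ over $\real$ may hold while no rational $y_i$ exists, namely at boundary points with irrational roots. I would fix this by taking the guard to be \emph{``$\progone{\varphi_i}(\AAA)\ne\bot$''}, which is computable because $\progone{\varphi_i}$ terminates. An equivalent fix is to fall through to the next branch on $\bot$. We then argue that the invariant still holds when the conjunct added to the specification is the real precondition $\neg\psi_i$. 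This is the delicate point, and if it fails one instead conjoins the rational-exact precondition.

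With that in place, suppose the loop exits because $\mathsf{RSolve}$ returns $\bot$ on the final $\varphi=\varphi_0\wedge\bigwedge_{k}\neg\psi_k$, where $k$ ranges over all guards produced. Then no rational $(\AAA,\BBB)$ satisfies it. So for every rational $\AAA$ with some rational $\BBB$ satisfying $\varphi_0(\AAA,\BBB)$, some $\psi_k(\AAA)$ holds. We then need to conclude that the corresponding branch yields an output.

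The obstacle is precisely this step: $\psi_k(\AAA)$ holding over the reals while no rational $y_k$ exists. To close it, I would show by induction over the iterations that the rational-solution set of $\varphi_0$ is covered by the guards whose single-output programs succeed. For each such rational $\AAA$ we take the first guard that succeeds. If only real-but-not-rational guards hold at $\AAA$, the witness $(\AAA,\BBB)$ still satisfies the refined specification with the exact (rational) guards, contradicting the exit condition of $\mathsf{RSolve}$.
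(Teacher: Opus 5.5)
Your termination argument and part~(1) are fine, and they are what the paper's one-line justification (``follows from the construction'') has in mind. One small correction in (1): the skip rule should be ``build $\rprog{i}$ only when every $\sigma_j$ with $j\neq i$ is rational''. Skipping iteration $i$ when $\sigma_i$ is irrational still leaves $\sigma_i$ fixed in every other branch.

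The genuine gap is in~(2), and your repair does not close it. Keeping the real refinement $\neg\psi_i$ while letting a failed branch fall through does not preserve completeness. Take $\varphi_0:=(y_1^2+y_2^2-x\ge 0)\wedge(y_1^2+y_2^2-x\le 0)$, and let $\mathsf{RSolve}$ return the model $x=y_1=y_2=0$, so $\sigma=(0,0)$. Then $\varphi_1\equiv(y_1^2=x)$ and $\varphi_2\equiv(y_2^2=x)$, so $\psi_1\equiv\psi_2\equiv(x\ge 0)$. The refined specification $\varphi_0\wedge x<0$ is unsatisfiable, so the loop exits after one iteration, before timeout. On input $x=2$, both $\progone{\varphi_1}$ and $\progone{\varphi_2}$ need a rational square root of $2$ and return $\bot$. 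The program therefore returns $\bot$, with or without fall-through, although $\varphi_0(2,1,1)$ holds. With the model $(1,1,0)$ instead, input $x=13=2^2+3^2$ fails the same way, since neither $13$ nor $12$ is a rational square.

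Your closing step hides exactly this. The witness $(\AAA,\BBB)$ does \emph{not} satisfy the specification the algorithm actually hands to $\mathsf{RSolve}$, because some real guard $\psi_k(\AAA)$ is true. So $\mathsf{RSolve}$ returning $\bot$ gives no contradiction. A contradiction would arise only for a specification refined by the rational-exact guards $\exists y_k\in\rat\,\varphi_k$, which the algorithm never builds and cannot build. These are not quantifier-free \qfnra\ formulas: the set of rational squares is infinite, yet every interval contains a non-square rational, so no Boolean combination of polynomial inequalities in $x$ defines it on $\rat$. Conjoining their negations turns the $\mathsf{RSolve}$ query into an $\exists\forall$ problem over $\rat$.

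Hence (2), as stated for Algorithm~\ref{algo:prog-synth}, cannot be proved: the example above refutes it. The paper's justification (``no way to satisfy the specification while violating all pre-conditions, hence the program realizes it'') silently assumes that a true guard $\psi_k(\AAA)$ makes $\rprog{k}$ succeed, which is precisely what fails. A provable version must change the algorithm or weaken the claim. For instance, use $\exists y_i\,\widehat{\varphi_i}$ (the strict version, which is exact on rational inputs by density, as inside Algorithm~\ref{algo:qfnqaone}) both as the guard and in the refinement. Then your covering argument goes through verbatim, at the cost that progress can stall on solutions lying on non-strict boundaries.
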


A detailed example (with multiple output variables), showing how repeated invocations of the single-output procedure synthesize values for several output variables, is given in Appendix~\ref{app:synth-program}.

%% file: experiments.tex
\section{Implementation and Experiments}
\label{sec:experiments}
We have implemented Algorithm~\ref{algo:prog-synth}  in a prototype
tool called {\nqs}. 
Our tool outputs a Python implementation of the synthesized program,
so that it is executable off-the-shelf. We use the SMT solver
\zthree~\cite{Z3} for $\mathsf{RSolve}$ that finds satisfying
assignments of polynomial constraints (steps 3, 10 of
Algorithm~\ref{algo:prog-synth}), and the specific tool
\qepcad~\cite{QEPCAD} -- a state-of-the-art open source quantifier elimination
engine for reals in SageMath~\cite{sagemath} -- for quantifier
elimination over reals (step 7 of the algorithm).  The
Python implementation of the synthesized program uses {\tt sympy}
modules for specific operations (viz. multiplication and constant
substitution) on polynomials, and also for real-root isolation and for
finding integer divisors when applying the Rational Root Theorem (see
Section~\ref{sec:single-out}).
Our tool, along with the benchmarks used in this paper, is available and can be downloaded from \href{https://github.com/anirjoshi/NQSynth}{https://github.com/anirjoshi/NQSynth}. 

Since {\sygus} tools don't synthesize programs for unrealizable specifications, we implemented an improvised {\sygus} tool for purposes of comparison. This tool first invokes \qepcad\ to compute a quantifier-free formula $\psi(\invar)$ equivalent to $\exists \outvar\,\varphi(\invar,\outvar)$ over reals.  It then feeds
$\psi(\invar) \rightarrow \varphi(\invar,\outvar)$ as a (realizable) specification over reals to a state-of-the-art {\sygus} solver, viz. \cvcfive~\cite{CVC5}.  We call this tool {\qepcad}+{\cvcfive}. We also implemented a third tool called {\modenum}, inspired by a model-enumeration based quantifier elimination technique for linear real arithmetic, due to Monniaux~\cite{Monniaux10}.  This tool starts with $\varphi(\invar,\outvar)$ and uses \zthree\ to find a real-valued model for $\varphi$.  If all output variables are assigned rational values, {\modenum} substitutes the values of all output variables in
$\varphi$ to obtain a pre-condition $\psi(\invar)$. We also synthesize
a program fragment (an {\tt if-then} statement) that assigns the
output variables values as in the model, whenever $\psi(\invar)$
holds.  The specification is then refined to $\varphi(\invar,\outvar)
\wedge \neg \psi(\invar)$ and the entire process repeated.  The
iterations continue until either the (refined) specification becomes
unsatisfiable, or until timeout.
We depict the program synthesized by \nqs\ for a given specification in  Appendix~\ref{app:synth-program}, along with an explanation of the steps of the procedure synthesizing the program.

In our implementations, when \zthree\ is invoked to find a model of
$\varphi(\invar,\outvar)$ in {\tt QF\_NRA}, it may return a model with
some output variables assigned irrational values.  Both
Algorithm~\ref{algo:prog-synth} and {\modenum} effectively ignore
these irrational values and block this model in the next invocation of
{\zthree}.

  \begin{figure}[t]
    \begin{center}
    \begin{tabular}{cc|cc}
      \includegraphics[scale=0.35]{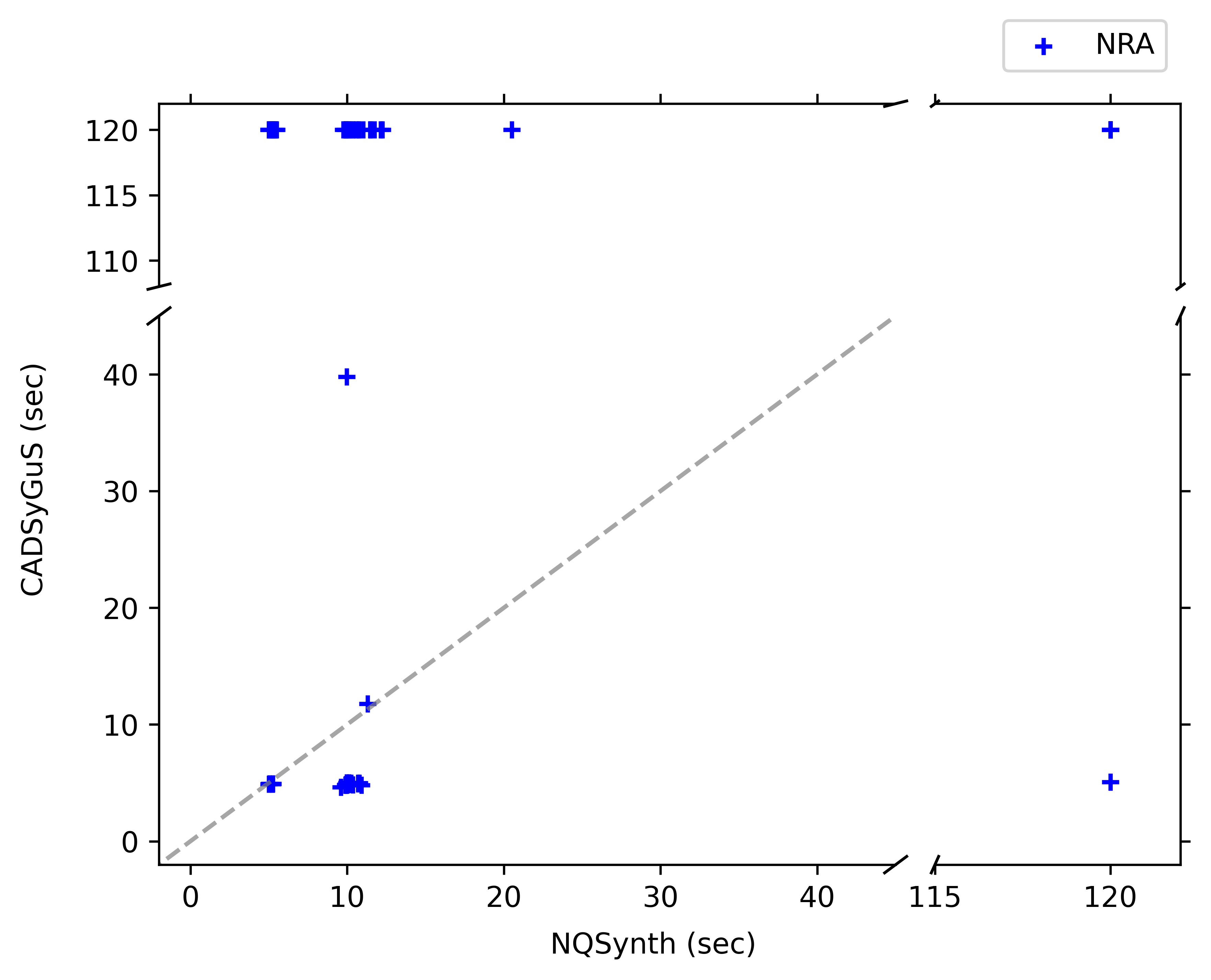} 
      &&&
      \includegraphics[scale=0.35]{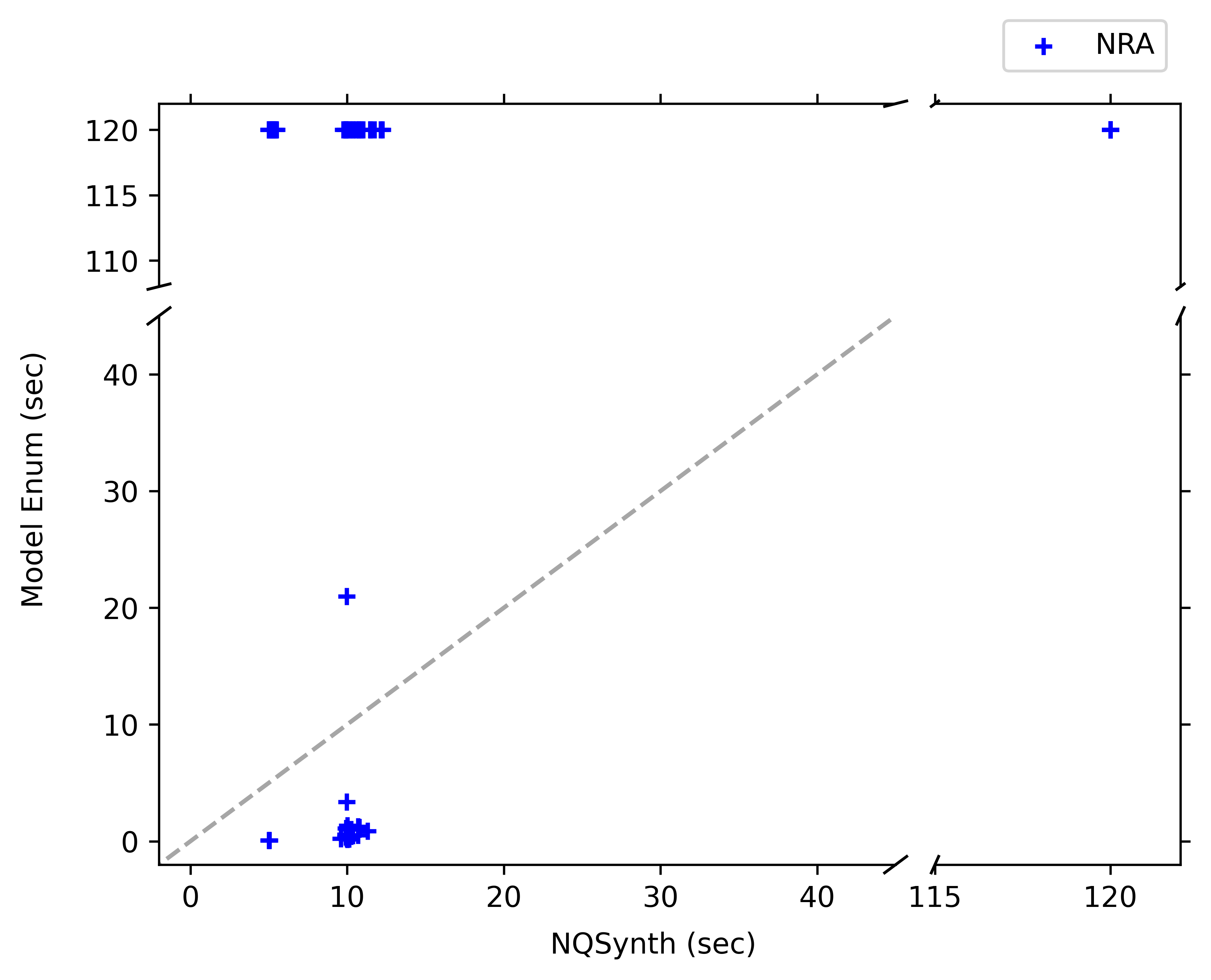} 
      \\
      \includegraphics[scale=0.35]{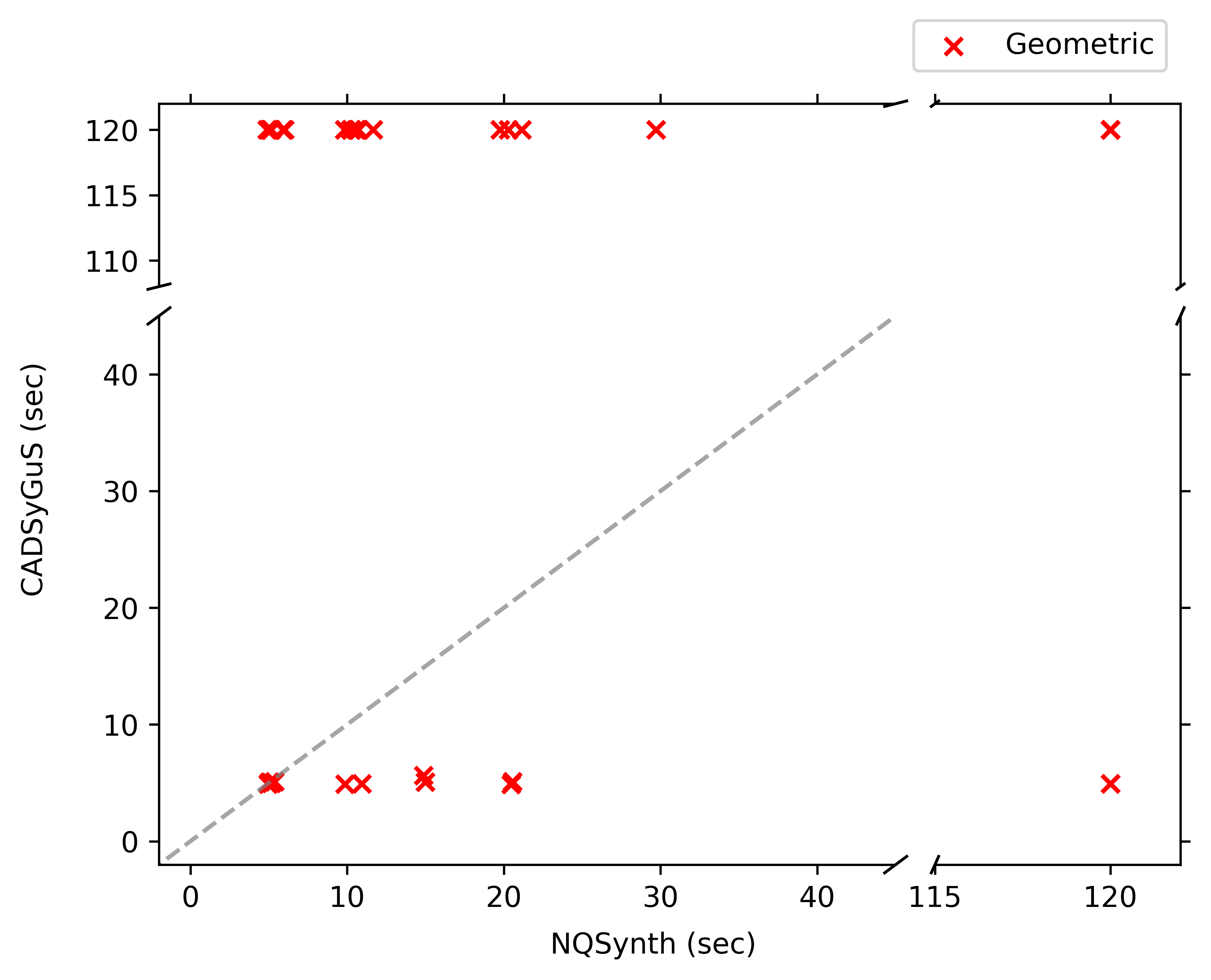} 
      &&&
      \includegraphics[scale=0.35]{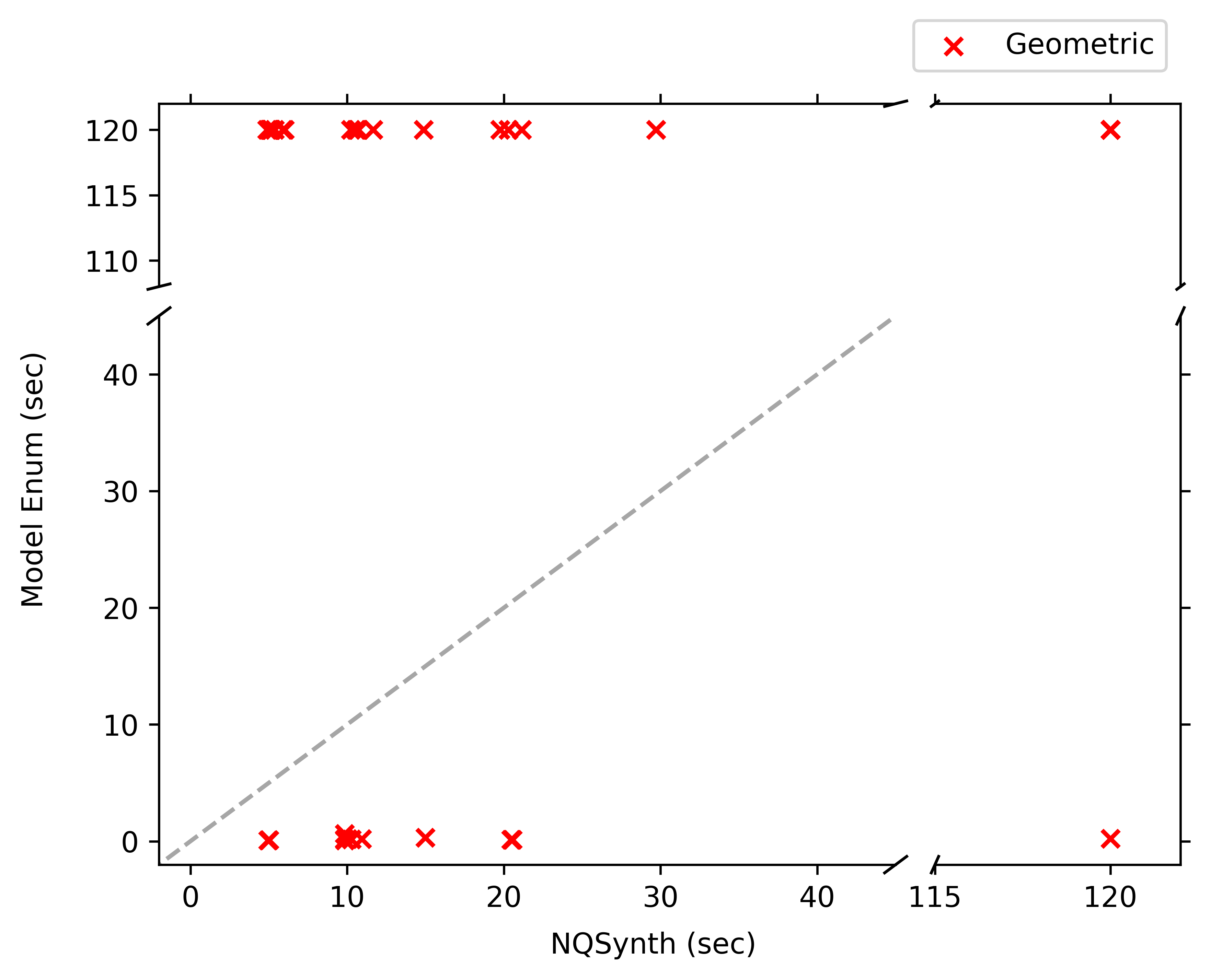} 
      \\
      (a) &&& (b)
    \end{tabular}
    \end{center}
    \caption{Run-time scatter-plots of \nqs\ with \qepcad+\cvcfive, and \modenum}    
    \label{fig:scatter-plots}
  \end{figure}

\vspace*{0.1cm}
\noindent {\emph{\bfseries Benchmarks: }} We used two sets of
  benchmarks for our evaluation.  The first set, henceforth called
  ``NRA'', are {\tt QF\_NRA} benchmarks in the SMT-LIB 2024
  repository~\cite{SMTLIB}. Since all inputs and outputs in {\prob}
  are assumed to be rational, we considered all variables in these
  benchmarks to have the rational type.  We filtered out unsatisfiable
  benchmarks, and also benchmarks with terms in mixed theories.
  During the course of experiments, we also noticed that
  \qepcad\ encounters resource constraints for large inputs involving
  $> 6$ variables.  Hence, we chose to evaluate our tool on $83$
  adapted benchmarks from the \texttt{meti-tarski/Arthan},
  \texttt{meti-tarski/asin}, and \texttt{zankl} families in the
  SMT-LIB 2024 {\tt QF\_NRA} suite.  Since {\tt QF\_NRA} benchmarks do
  not come with default listing of input/output variables, we
  partitioned the variables in each family to ensure that the
  resulting specifications (for synthesis) had non-linear constraints
  involving output variables.  Specifically, for formulas in the
  \texttt{meti-tarski/Arthan} family (approximate trigonometric
  constraints), we used {\tt skoSINS} and {\tt skoCOSS} as output
  variables.  For the \texttt{meti-tarski/asin} family, we used {\tt
    skoSP} and {\tt skoSM} as output variables, and for the
  \texttt{zankl} family, we used {\tt b} as the output variable. The
  set of resulting 83 benchmarks is listed in
  Appendix~\ref{app:expt-results}. To make the benchmarks challenging,
  we also replaced each equality constraint with a more general
  relation (of which equality is a special case).  Specifically, we
  introduced a new input variable $\delta$, and replaced all
  (sub-)formulas of the form $\textit{term}_1=\textit{term}_2$ by
  $(\delta \ge 0) \wedge (-\delta \le
  \textit{term}_1-\textit{term}_2\leq \delta)$.  
Our second set of benchmarks, called ``Geometric'', consists of $51$ Non-Linear Real Arithmetic (NRA) specifications motivated by problems in 2- and 3-dimensional geometry. These benchmarks model spatial relationships and bounding problems, such as intersecting circles, annular regions, and spheres with varying or fixed radii. The suite additionally also contains a broad range of non-linear behavior by including mixed-degree curves, parabolas, hyperbolas, and higher-power polynomial boundaries (ranging up to degree-100). While these
  benchmarks are easy to describe, synthesizing rational programs for
  them turns out to be difficult in most cases.

\vspace*{0.1cm}
\noindent {\emph{\bfseries Evaluation results: }} All our experiments
  were conducted on a MacBook Pro 2023 with Apple M2 Pro processor and
  32 GB main memory.  A timeout of 120 seconds was set for each
  technique. We say a technique \emph{solves} a
  benchmark if it synthesizes a program that \emph{realizes} the
  corresponding specification over rationals before timing out.  When
  reporting time taken by a tool for a set of benchmarks, we report
  (minimum, average, maximum) triples. The PAR2 (Penalized Average
  Runtime 2) score is a widely used metric (smaller is better) that
  computes the average runtime by penalizing each timeout with twice
  the timeout value.

  A comparison of the relative performance of \nqs,
  \qepcad+\cvcfive\ and \modenum\ on the two benchmark sets is given
  in Table~\ref{tab:comp}. Scatter plots of the times taken on
  individual benchmarks are shown in Fig.~\ref{fig:scatter-plots}a
  (\nqs\ vs \qepcad+\sygus) and Fig.~\ref{fig:scatter-plots}b
  (\nqs\ vs \modenum). There is only 1 NRA benchmark that \nqs\ cannot
  solve in 120s that {\qepcad}+{\cvcfive} solves in 5.07s.  On the
  other hand, there are 34 NRA benchmarks that {\qepcad}+{\cvcfive}
  can't solve in 120s, but all of these are solved by \nqs\ within
  25s. There is no NRA benchmark that \nqs\ cannot solve but \modenum\ can
  within 120s.\\
  \begin{table}[!tbp]
    \centering
    \resizebox{\textwidth}{!}{
    \begin{tabular}{|c|c|c|c|c|c|}
      \hline
      Benchmark Set &        & \nqs\   & \qepcad\ + \cvcfive\ &  \modenum\ \\
      \hline
      \hline
      NRA        & Solved &  59 & 26 & 23 \\
      \cline{2-5}
      Total: 83 & Time (s) & (4.998, 9.096, 20.512) & (4.649, 6.560, 39.785) & (0.067, 3.529, 45.784)\\
      \cline{2-5}
      & PAR2 & 75.864 & 166.856 & 174.472 \\
      \hline
      Geometric  & Solved & 33 &16 & 11\\
      \cline{2-5}
      Total: 51 & Time & (4.890, 10.436, 29.703) &(4.879, 5.030, 5.623)& (0.083, 0.205, 0.628)\\
      \cline{2-5}
      & PAR2 &91.459 &166.284 & 188.280\\
      \hline
    \end{tabular}
    }
    \caption{Comparison of \nqs, \qepcad+\cvcfive, and \modenum}    
    \label{tab:comp}
  \end{table}

  Our results clearly show the effectiveness \nqs\ over
  ({\qepcad}+{\cvcfive}) and {\modenum}.  On further investigation, we
  found that quantifier elimination using {\qepcad} in step 7 of
  Algorithm~\ref{algo:prog-synth} was the sole reason for phase 1
  timing out.  In 8 out of 24 timeout cases in Set 1, it was the
  first quantifier elimination itself that timed out.  In the
  remaining cases, the tool timed out during quantifier elimination
  while iterating in the outer {\tt while} loop of
  Algorithm~\ref{algo:prog-synth}, having generated between 18 to 24
  pre-conditions (whose disjunction was not yet the weakest
  pre-condition).  In the case of Geometric benchmarks, out of 18
  timeout cases for Algorithm~\ref{algo:prog-synth}, 3 timed out in
  the first quantifier elimination call.  In the other 15 cases, the
  tool timed out after having generated between 2 and 25
  pre-conditions, that did not yield the weakest pre-condition yet.
  Note that in all cases where at least one pre-condition was
  generated, Algorithm~\ref{algo:prog-synth} outputs a sound program,
  even though it timed out.  
  Interestingly, none of the invocations of \zthree\ resulted in any irrational model being generated.  
  Hence, there were no ``wasted'' models, as per the check
  in line 8 of Algorithm~\ref{algo:prog-synth}.

  Detailed data for our experiments on both sets of benchmarks are in
  the Appendix~\ref{app:expt-results}.  Overall, our experiments on
  both sets of benchmarks establish the superiority of
  Algorithm~\ref{algo:prog-synth}, and justify its inclusion as phase
  1 of {\nqs}.

To assess the impact of the underlying quantifier elimination engine, we evaluated our tool on a representative subset of 15 benchmarks (10 standard NRA and 5 synthetic geometric instances) by varying the backend to MAPLE and Mathematica. This subset was chosen to cover diverse structural constraints and includes a mix of both solvable and unsolved instances.

Replacing the default Sage-\qepcad\ backend with MAPLE and Mathematica changed the synthesis outcome for one instance. Specifically, the benchmark \texttt{Arthan\_1A\_Arthan1A-chunk-0017.smt2}, which timed out under Sage-\qepcad\ and MAPLE, was successfully solved by Mathematica in 23.91 seconds. For all other benchmarks, the solvability status remained unchanged across all three backends.
For the instances solved by all three engines, switching to MAPLE or Mathematica reduced execution time by an order of magnitude. For example, the NRA benchmark \texttt{zankl\_gen-04.smt2} was solved in 0.18 seconds using MAPLE and 0.17 seconds using Mathematica, compared to 5.49 seconds with Sage-\qepcad\. Because the remaining harder instances consistently timed out across all three engines, the bottleneck for these specific problems likely lies in the overall synthesis procedure rather than a particular quantifier elimination engine, though this requires further investigation.

Detailed total execution times for each benchmark across the three quantifier elimination engines are reported in Table~\ref{tab:qe_runtimes} in Appendix~\ref{app:expt-results}.

%% file: conclusion.tex
\section{Conclusion}
\label{sec:conclusion}
\vspace{-0.1in}

In this paper, we addressed rational program synthesis from polynomial specifications.
We investigated theoretical aspects of the problem, loop-free conditions and links to quantifier elimination.
We then developed algorithms for synthesis, sound and complete for subclasses and an {\qfnra}-oracle based sound algorithm in general. Our prototype implementation already shows promise of the approach in synthesizing rational programs for polynomial specification.  As future work, we would like to know when a particular polynomial specification has a loop-free program, if we can effectively synthesize it. 
Finally, a practical bottleneck in this work is the dependence on {\qepcad} and removing this would be a path towards further applicability. 

%% file: app-stability.tex
\section{Appendix for Section~\ref{sec:intro}}
\label{app:intro}

\subsection{Stability Challenge for reals}
\label{sec:stability}

Consider the specification $\varphi(x,y)=16 \leq x^2+y^2\leq 25$, where $y$ is an input and $x$ is an output, and the goal is to synthesize a program that for every value of $x$ provides a rational value of $y$, when it exists.
The pictorial representation of this specification is given in Figure~\ref{fig:simple}.
\begin{figure}
  \centering
  \begin{tikzpicture}[
    line cap=round,
    line join=round,
    >=Triangle,
    myaxis/.style={->,thick}
  ]
  \fill [red,even odd rule] (0,0) circle[radius=2cm] circle[radius=1.5cm];
  
  \foreach \radius in {1.5,2}
    \draw [thick] (0,0) circle[radius=\radius cm];
  
  \begin{scope}[
    every node/.append style={
      circle,
      font=\scriptsize,
      inner sep=1pt
  }]
  
     \draw [thick] (0,0) -- (40:1.5cm) node[anchor=220] {$4$};
     \draw [thick] (0,0) -- (55:2cm) node[anchor=235] {$5$};
  \end{scope}
  
  \draw[myaxis] (-2.66,0) -- (2.66,0) node[anchor=90]{$x$};      
  \draw[myaxis] (0,-2.66) -- (0,2.66)node[anchor=0]{$y$};         
  
  \end{tikzpicture}
  \caption{A simple specification}
  \label{fig:simple}  
\end{figure}

From a practical usability point of view, even close rational approximations to real algebraic numbers may lead to significant errors when evaluating polynomials.  To see why this is so, consider the conjunction of the following polynomial
constraints on input variable $x$ and output variables $y_1$ and $y_2$:
\begin{align}
  y_2^{20} - y_1^2.y_2^{19} + c_{18}.y_2^{18} + \cdots + c_1.y_2 + c_0 = 0, \label{ex:0}\\
  y_2 > 19, y_2 \leq 20, y_1^2 = x \label{ex:1}
\end{align}
where $c_{i}$ is the integer coefficient of $y_2^i$ in the expansion
of the Wilkinson's polynomial $\Pi_{j=1}^{20}\big(y_2 -
j\big)$ for all $i \in \{0, \ldots 18\}$.  It is well
known (see for example~\cite{Wilkinson,Mosier}) that the coefficient of
$y_2^{19}$ in the Wilkinson's polynomial is $-210$, but if this
coefficient is altered even slightly, there are significant changes
that happen to the roots of the Wilkinson's polynomial.  For example,
changing the coefficient of $y_2^{19}$ from $-210$ to $-210 - 2^{-23}$
in $\Pi_{j=1}^{20}\big(y_2 - j\big)$ makes $20$ a non-root of the
polynomial, and in fact, makes the polynomial evaluate to $-6.25\times
10^{17}$ when $y_2 = 20$. Thus, although the polynomial constraints
shown in (\ref{ex:0}) and (\ref{ex:1}) admit the (real algebraic)
solution $y_1 = \sqrt{210}, y_2 = 20$ when $x = 210$, if we
approximate $\sqrt{210}$ by the rational $14.4913767503$ (an error of
less than $0.0000001$), then $y_1 = 14.4913767503, y_2 = 20$ is no
longer a solution of the polynomial constraints in (\ref{ex:0}) and
(\ref{ex:1}).  In fact, it causes the polynomial in (\ref{ex:0}) to
evaluate to $-6.25\times 10^{17}$ -- a number much smaller than $0$.
Thus, {\em even close rational approximations can cause a specification that is satisfied when using the real algebraic value to become violated when using the rational approximation}.

%% file: app-hardness.tex
\section{Appendix for Section~\ref{sec:hardness}}
\label{app:hardness}

\thmnoloopfree*

\begin{proof}
  We will show that if $\prob$ can be solved using loop-free programs then this would imply that the theory of rational arithmetic admits quantifier elimination.  Consider $\varphi(\XX,\YY)$ an instance of $\prob$. By assumption we have a loop-free program $\prog(\XX)$ which solves $\prob$. This program has a tree structure where its leaves, i.e., return statements are either polynomials or $\bot$. Let us say that the polynomials at the leaves are denoted $p_1,\ldots p_r$ all over $\XX$. Then we know that for any input value of $\XX$, either there is no $\YY$ such that $\varphi(\XX,\YY)$ holds, in which case $\prog$ returns $\bot$ and otherwise $\prog$ returns some value obtained by evaluating one of the polynomials $p_1,\ldots p_r$. Consider $\psi(\XX):= \varphi(\XX,p_1)\vee \ldots \vee \varphi(\XX,p_r)$. For any $\AAA\in\rat^{|\XX|}$, if $\psi(\AAA)=1$, then $p_k(\AAA)=\BBB$ for some $p_k$ and $\BBB\in\rat^{|\YY|}$ such that $\varphi(\AAA,\BBB)=1$, i.e., $\exists \YY \varphi(\AAA,\YY)=1$. And conversely, if $\exists \YY\varphi(\AAA,\YY)=1$ for some value $\AAA$ of inputs, then by definition of the loop-free program, we obtain that $\varphi(\AAA,p_k(\AAA))=1$ for some $k$, i.e., $\psi(\AAA)=1$. Thus, $\psi(\XX)\equiv \exists \YY \varphi(\XX,\YY)$, which implies the desired claim, i.e., the theory of non-linear rational arithmetic admits quantifier elimination, i.e., performing quantifier elimination for the theory of non-linear rational arithmetic. However, by the famous result of Robinson~\cite{robinson1959undecidability}, we know that this is in fact undecidable.
  Hence we obtain that our desired result.\qed
\end{proof}

%% file: example.tex
\clearpage
\section{Example Synthesized Program}
\label{app:synth-program}

\begin{algorithm}[H]
\scriptsize
\setlength{\abovecaptionskip}{2pt}
\setlength{\belowcaptionskip}{2pt}
\caption{Synthesized program realizing $\phi(x,y,z)$}
\label{alg:synth-program-example}
\begin{algorithmic}[1]
\Require Post-condition
$\phi(x,y,z) :=
(0 \geq x^2 + \frac{y^2}{9} + \frac{z^2}{16} - 1)
\land
(0 > -\frac{x^2}{9} - \frac{y^2}{16} - z^2 + 4z - 2)$
\Require Input variable $\hat{x}$
\Require Output variables $\hat{y},\hat{z}$ such that $\phi(\hat{x},\hat{y},\hat{z})$ is satisfied

\Procedure{Main}{}
    \State $\hat{x} \gets \text{\textsc{Input}}(\text{``enter numerator and denominator of } \hat{x}\text{:''})$

    \State \textbf{/* Evaluate pre-condition 0 */}
    \If{$16\hat{x}^2 - 7 \leq 0$}
        \State $\hat{z} \gets -3$; $\hat{y} \gets \lambda$
        \State $\phi_{uv}(\lambda) \gets \phi(\hat{x}, \lambda, -3)$
        \State $\lambda_{\mathrm{val}} \gets \progone{\phi_{uv}}(\hat{x})$
        \If{$\lambda_{\mathrm{val}} \neq \bot$}
            \State \Return $\hat{y}=\lambda_{\mathrm{val}}$, $\hat{z}=-3$
        \Else
            \State \Return $\bot$
        \EndIf
    \EndIf

    \State \textbf{/* Evaluate pre-condition 1 */}
    \If{$36\hat{x}^2 - 35 \leq 0$}
        \State $\hat{y} \gets \frac{1}{2}$; $\hat{z} \gets \lambda$
        \State $\phi_{uv}(\lambda) \gets \phi(\hat{x}, \frac{1}{2}, \lambda)$
        \State $\lambda_{\mathrm{val}} \gets \progone{\phi_{uv}}(\hat{x})$
        \If{$\lambda_{\mathrm{val}} \neq \bot$}
            \State \Return $\hat{y}=\frac{1}{2}$, $\hat{z}=\lambda_{\mathrm{val}}$
        \Else
            \State \Return $\bot$
        \EndIf
    \EndIf

    \State \textbf{/* Evaluate pre-condition 2 */}
    \If{$1024\hat{x}^2 - 1023 \leq 0$}
        \State $\hat{y} \gets \lambda$; $\hat{z} \gets \frac{1}{8}$
        \State $\phi_{uv}(\lambda) \gets \phi(\hat{x}, \lambda, \frac{1}{8})$
        \State $\lambda_{\mathrm{val}} \gets \progone{\phi_{uv}}(\hat{x})$
        \If{$\lambda_{\mathrm{val}} \neq \bot$}
            \State \Return $\hat{y}=\lambda_{\mathrm{val}}$, $\hat{z}=\frac{1}{8}$
        \Else
            \State \Return $\bot$
        \EndIf
    \EndIf

    \State \textbf{/* Evaluate pre-condition 3 */}
    \If{$(\hat{x}+1 \geq 0) \land (\hat{x}-1 \leq 0)$}
        \State $\hat{y} \gets 0$; $\hat{z} \gets \lambda$
        \State $\phi_{uv}(\lambda) \gets \phi(\hat{x}, 0, \lambda)$
        \State $\lambda_{\mathrm{val}} \gets \progone{\phi_{uv}}(\hat{x})$
        \If{$\lambda_{\mathrm{val}} \neq \bot$}
            \State \Return $\hat{y}=0$, $\hat{z}=\lambda_{\mathrm{val}}$
        \Else
            \State \Return $\bot$
        \EndIf
    \EndIf

    \State \Return $\bot$ \Comment{$\boldsymbol{\nexists y,z\;\phi(\hat{x},y,z)}$}
\EndProcedure
\end{algorithmic}
\end{algorithm}

We illustrate Algorithm~\ref{algo:prog-synth} on the following specification with one input $x$ and two outputs $y,z$:

\[
\varphi(x,y,z) :=
\left(0 \ge x^2+\frac{y^2}{9}+\frac{z^2}{16}-1\right)
\wedge
\left(0 > -\frac{x^2}{9}-\frac{y^2}{16}-z^2+4z-2\right).
\]

Algorithm~\ref{alg:synth-program-example} depicts the program generated by Algorithm~\ref{algo:prog-synth} for a specification with one input variable $x$ and two output variables $y, z$. 
It is important to note that the branches of the generated program are not guessed directly.  They arise from successive iterations of Algorithm~\ref{algo:prog-synth}: in each iteration,
\textsc{RSolve} is invoked on the current residual specification, a rational
point is obtained, and all but one of the output variables are fixed according
to this model.  This yields a single-output synthesis problem, which is then
handled by Algorithm~\ref{algo:qfnqaone}.  The guard of the corresponding branch
is the precondition obtained by eliminating the remaining output variable from
this single-output instance.

Next, we describe how the branches of the synthesized program arise from iterations of Algorithm~\ref{algo:prog-synth}.
In the first iteration, \textsc{RSolve} returns a rational model
of $\varphi$ with, for example,
\[
x=\frac18,\qquad y=\frac12,\qquad z=-3.
\]
Algorithm~\ref{algo:prog-synth} then considers two single-output projections
obtained from this model.

First, it fixes $z=-3$ and leaves $y$ as the output to be synthesized.  Thus,
we obtain the single-output specification
\[
\varphi(x,\lambda,-3).
\]
The precondition for this branch is computed by quantifier elimination from
\[
\exists \lambda.\,\varphi(x,\lambda,-3).
\]

This corresponds to Lines~3--10 of Algorithm~\ref{alg:synth-program-example}.  Indeed, after substituting $z=-3$,
the first conjunct of $\varphi$ contains
\[
x^2+\frac{\lambda^2}{9}+\frac{9}{16}\le 1,
\]
and hence a solution for $\lambda$ can exist only when
\[
x^2\le \frac{7}{16},
\]
i.e. when $16x^2-7\le 0$.  Inside this branch, Algorithm~\ref{algo:qfnqaone}
synthesizes a rational value for $y$, and the program returns $(y,-3)$.
Note that the guard of each branch is the precondition $\psi_i(x)$ computed by quantifier elimination for the resulting single-output instance. 
Inside the branch, the remaining output variable is computed using \qepcad.

Then, from the same model, the algorithm fixes $y=\frac12$ and leaves $z$ as
the output to be synthesized.  This gives the single-output specification
\[
\varphi(x,\tfrac12,\lambda).
\]
Equivalently, using the model value $z=-3$ as the base point, this corresponds
to invoking \qepcad\ on
\[
\exists\lambda.\,\varphi(x,\tfrac12,\lambda-3),
\]
The resulting precondition is
\[
36x^2-35\le 0.
\]
This gives the guard in Lines~11--18 of
Algorithm~\ref{alg:synth-program-example}; inside the branch,
Algorithm~\ref{algo:qfnqaone} synthesizes the corresponding value of $z$.

After these two branches are generated, Algorithm~\ref{algo:prog-synth} updates
the residual specification by excluding the input regions already covered by the
computed preconditions.  Thus the next call to \textsc{RSolve} is made on
\[
\varphi(x,y,z)
\wedge \neg(16x^2-7\le 0)
\wedge \neg(36x^2-35\le 0).
\]
This residual formula is still satisfiable.  In the run shown here, \textsc{RSolve}
returns another rational model, for example
\[
x=-\frac{127}{128},\qquad y=0,\qquad z=\frac18.
\]

From this model, the algorithm again constructs single-output projections.
First, it fixes $z=\frac18$ and leaves $y$ as the output variable, obtaining
\[
\varphi(x,\lambda,\tfrac18).
\]
Quantifier elimination on
\[
\exists\lambda.\,\varphi(x,\lambda,\tfrac18)
\]
returns the precondition
\[
1024x^2-1023\le 0.
\]
This is the guard in Lines~19--26 of
Algorithm~\ref{alg:synth-program-example}.  Thus, this branch is introduced
because the residual specification after the previous iteration still had
rational models, and the model found by \textsc{RSolve} had $z=\frac18$.

Finally, from the same model, the algorithm fixes $y=0$ and leaves $z$ as the
output variable.  Equivalently, using $z=\frac18$ as the base point, this
corresponds to the single-output query
\[
\exists\lambda.\,\varphi(x,0,\lambda+\tfrac18).
\]
\qepcad\ returns
\[
(x+1\ge 0)\wedge (x-1\le 0),
\]
i.e. the precondition $-1\le x\le 1$.  This gives the guard of the final branch,
where Algorithm~\ref{algo:qfnqaone} synthesizes the corresponding value of $z$.

After adding these branches, the residual specification becomes
\[
\begin{aligned}
\varphi(x,y,z)
&\wedge \neg(16x^2-7\le 0)
 \wedge \neg(36x^2-35\le 0)\\
&\wedge \neg(1024x^2-1023\le 0)
 \wedge \neg\big((x+1\ge 0)\wedge(x-1\le 0)\big).
\end{aligned}
\]
This formula is unsatisfiable.  Hence the accumulated preconditions cover all
rational inputs for which the original specification is satisfiable.  Therefore,
the decision-list program in Algorithm~\ref{alg:synth-program-example} realizes
$\varphi$: it returns a rational pair $(y,z)$ satisfying $\varphi(x,y,z)$ whenever
one exists, and returns $\bot$ otherwise.

%% file: app-complete.tex
\section{Appendix for Section~\ref{sec:algorithm}}~\label{sec:app-complete}

\subsection{Solutions of \probreal\ are also solutions of \prob}~\label{app:sec:nra-synth}

\thmnrasynthesizable*

\begin{proof}
  Suppose algorithm ${\mathcal A}_{\mathcal L}$ generates a program
  $\prog \in \gram$ that realizes $\varphi(X,Y)$ over reals. Since
  ${\mathcal A}_{\mathcal L}$ is sound and complete for the fragment
  ${\mathcal L}$, it follows that for every input $\AAA \in
  \mathbb{R}^{|\XX|}$, $\prog(\AAA)$ generates a real-valued vector
  $\BBB \in \mathbb{R}^{|\YY|}$ such that $\varphi(\AAA,\BBB)$ holds.
  Since the assignments and guards of $\prog$ (which comes from \gram)
  use only polynomial terms with rational constants, it computes
  outputs as polynomial expressions over the inputs.  Since rationals
  are closed under polynomial maps, it follows that for every rational
  input $\AAA \in \mathbb{Q}^{|X|}$, if $\prog(A)$ terminates with
  output $\BBB$, then $\BBB \in \mathbb{Q}^{|Y|}$ as well. Since
  algorithm ${\mathcal A}_{\mathcal L}$ is complete for the fragment
  ${\mathcal L}$ over reals, and since $\mathbb{Q} \subset
  \mathbb{R}$, it follows that it is complete over rational inputs as
  well.  However, we have just argued above that when fed rational
  inputs, the synthesized program must generate rational outputs or
  $\bot$.  Hence ${\mathcal A}_{\mathcal L}$ is complete for the
  fragment ${\mathcal L}$ over rationals.

  For the second part of the theorem, consider the degenerate
  \qfnra\ fragment $\mathcal{L}$ consisting only of the specification
  $y^2 = x$.  Since this is a single output specification, we know
  from Section~\ref{sec:single-out} that there exists a sound and
  complete algorithm for solving \prob\.  However, if there was a
  sound and complete algorithm for solving \probreal\ (using grammar
  \gram) for this fragment ${\mathcal L}$, then this algorithm must
  give a program that takes any non-negative real value of $x$ as
  input, and gives its square-root as a real valued output, while only
  using assignments and conditions that are polynomials.  This means
  that for $x = 2$, it must compute $\sqrt{2}$ using polynomials over
  rationals.  Since rationals are closed under polynomial maps, and
  since $\sqrt{2}$ is irrational, this is an impossibility.  Hence,
  there is no sound and complete algorithm for solving $\probreal$
  over reals for the fragment ${\mathcal L}$.
\end{proof}

\subsubsection{{\qfnqa} with one output variable}\label{sec:1-var}

Algorithm~\ref{algo:qfnqaone} shows the pseudocode of $\prog(\inVar)$ that realizes $\varphi(\inVar, y)$ in $\qfnqa$. 
This algorithm uses a function ${\candsol}(\cdot)$ that takes as argument a polynomial equation and returns the finite set of candidate rational roots of the equation, as described above.
Further the algorithm also uses the function {\algqfnras} which when given a set of strict polynomial inequalities, returns a finite set of rational points, such that one of the points satisfies all the polynomial inequalities. 
This is implemented using the set of intervals computed using Real-root isolation~\cite{RRI-Uspensky,Collins-RRI,RRI-latest}.

Let $p$ be the the product of all the polynomials used in the set of constraints $\varphi(\inVar,\outvar)$ and consider the constraint of the form $p = 0$.
Clearly, there is a bijection between the roots of polynomials and the roots of $p$.
Then, applying real-root isolation (with a parameter $\epsilon$ for the size of the interval) to $p$, we get a set of disjoint intervals of the form
$I = \{(l_1,u_1), (l_2,u_2), (l_3,u_3), \cdots, (l_k,u_k)\}$.
We remark that a value of $\epsilon$ that ensures that the intervals are disjoint is computed by starting with a fixed value of $\epsilon$ and refining it until the intervals are disjoint. 
Then, the set of points returned by {\algqfnras} is of the form 
$\{l_1 -1, \frac{u_1+l_2}{2}, \frac{u_2+l_3}{2}, \cdots, \frac{u_{k-1}+l_k}{2},u_k + 1\}$.

\thmqfnqaone*
\begin{proof}
  The proof relies on the observation that every rational solution of
  $\varphi(\inVar,y)$ either satisfies $\widehat{\varphi}(\inVar, y)$
  or satisfies at least one equation $p = 0$, where $p \in
  \nstrict(\varphi)$. 
  The proof now follows directly from the 
  fact that {\algqfnras} is a pre-determined procedure for computing rational sample points for formulas in $\qfnra$ with strict constraints and Theorem~\ref{thm:rrt}.
\end{proof}

Finally, we remark that the procedure $\progone{\varphi}$ synthesized using Real Root Isolation (RRI) and the Rational Root Theorem (RRT) can be implemented as programs derivable from our grammar $\gram$.

  First, we consider the part of the program that uses the Rational Root Theorem. 
  Note that, given a univariate polynomial with integer coefficients, the set of candidate rational roots is finite and effectively computable from the integer factors of the constant and leading coefficients. Since this enumeration is finite and each candidate can be checked using polynomial equalities and inequalities, the entire procedure can be expressed using a finite composition of assignments and conditionals, all allowed in $\gram$.

  Next, we consider the part of the program that uses Real Root Isolation.
  Real-root isolation is a technique used to compute all the real roots of a univariate polynomial.
  For a given polynomial, RRI techniques output disjoint intervals on the number line such that each interval contains exactly one real root, where we can specify the size of the interval, denoted by $\epsilon$.
  Applying real-root isolation (with a parameter $\epsilon$ for the size of the interval) to the polynomial, we get a set of intervals of the form
  $I = \{(l_1,u_1), (l_2,u_2), (l_3,u_3), \cdots, (l_k,u_k)\}$.
  Note that the parameter $\epsilon$ is iteratively refined until the intervals are guaranteed to be disjoint.

  Finally, note that the only operation that is used in the RRI and RRT procedures outside our grammar is division, which can be implemented using repeated subtraction, conditionals and loops, and therefore can be expressed using $\gram$.

\section{Appendix for Section~\ref{sec:experiments}}~\label{sec:app-experiments}

%% file: app-benchmarks.tex
\clearpage
\phantomsection
\label{app:expt-results}
\addcontentsline{toc}{section}{Appendix for Section~\ref{sec:experiments}}

\includepdf[pages=-]{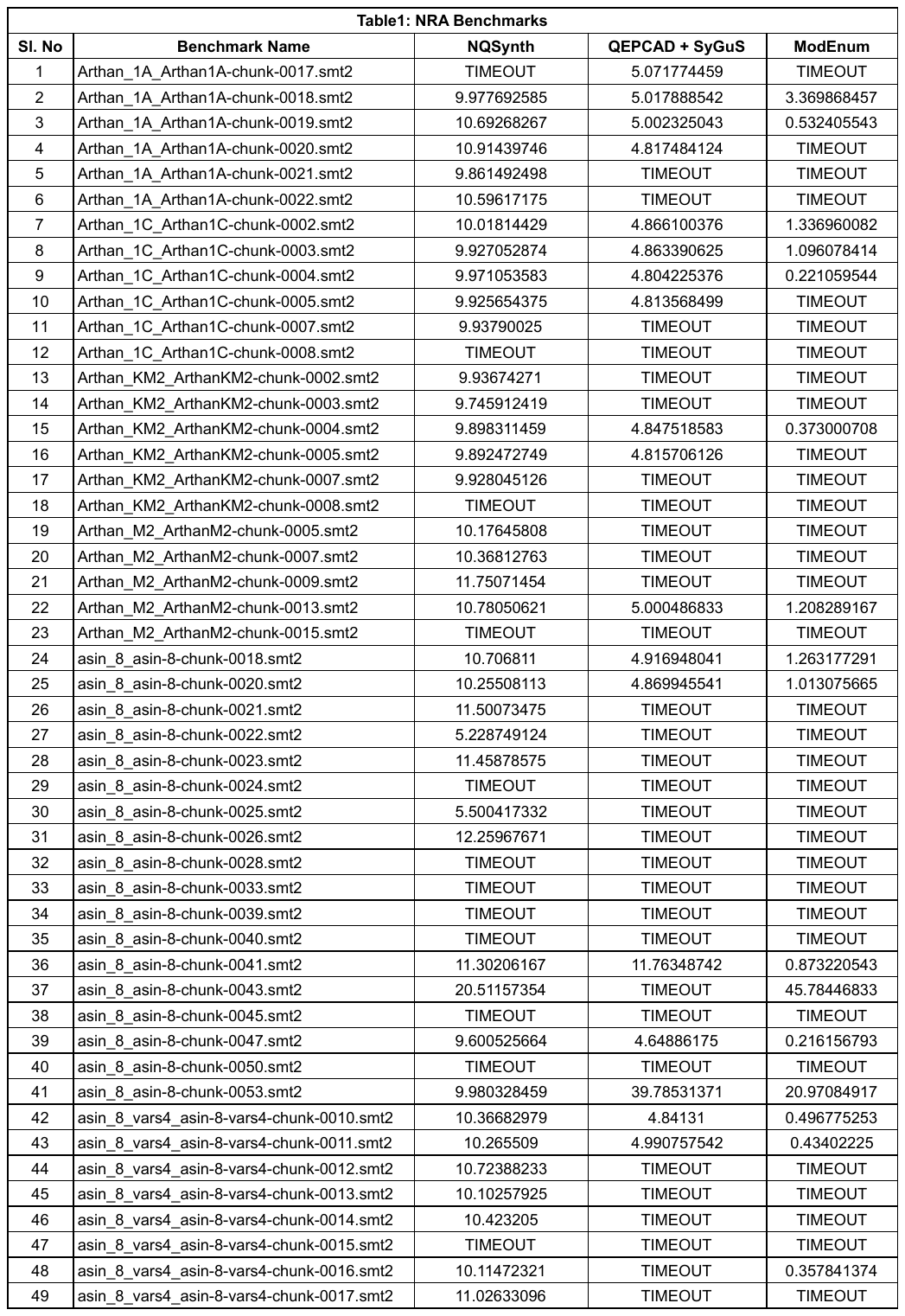}

\includepdf[pages=-]{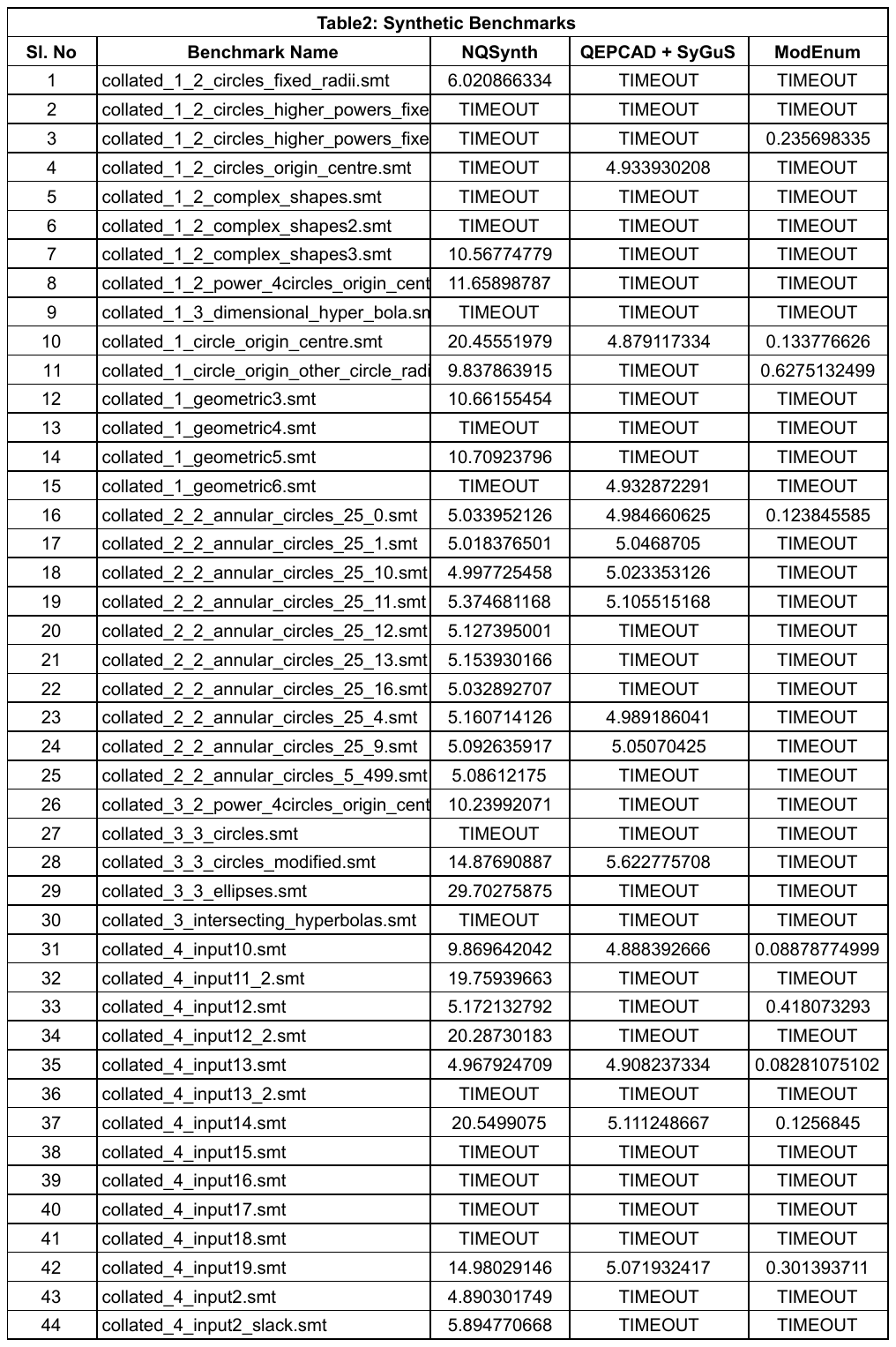}

\begin{table}[htbp]
\centering
\caption{Total Execution Time Comparison Across QE Backends (Timeout = 120s)}
\label{tab:qe_runtimes}
\begin{tabular}{lrrr}
\toprule
\textbf{Benchmark Filename} & \textbf{Sage (s)} & \textbf{MAPLE (s)} & \textbf{Mathematica (s)} \\ 
\midrule
\multicolumn{4}{l}{\textit{NRA Benchmarks}} \\
\texttt{Arthan\_1A\_Arthan1A-chunk-0017.smt2} & 119.73$^{\dagger}$ & 119.79$^{\dagger}$ & 23.91 \\
\texttt{Arthan\_1A\_Arthan1A-chunk-0020.smt2} & 11.87 & 1.39 & 1.21 \\
\texttt{Arthan\_KM2\_ArthanKM2-chunk-0002.smt2} & 9.63 & 0.72 & 0.31 \\
\texttt{Arthan\_KM2\_ArthanKM2-chunk-0008.smt2} & 119.97$^{\dagger}$ & 118.81$^{\dagger}$ & 119.32$^{\dagger}$ \\
\texttt{asin\_8\_asin-8-chunk-0023.smt2} & 10.04 & 1.05 & 0.88 \\
\texttt{asin\_8\_asin-8-chunk-0024.smt2} & 119.74$^{\dagger}$ & 119.14$^{\dagger}$ & 118.25$^{\dagger}$ \\
\texttt{asin\_8\_vars4\_asin-8-vars4-chunk-0015.smt2} & 119.81$^{\dagger}$ & 116.88$^{\dagger}$ & 118.14$^{\dagger}$ \\
\texttt{asin\_8\_vars4\_asin-8-vars4-chunk-0017.smt2} & 11.03 & 1.34 & 1.14 \\
\texttt{zankl\_gen-04.smt2} & 5.49 & 0.18 & 0.17 \\
\texttt{zankl\_gen-10.smt2} & 5.41 & 0.18 & 0.17 \\ 
\midrule
\multicolumn{4}{l}{\textit{Synthetic Geometric Benchmarks}} \\
\texttt{1\_2\_circles\_fixed\_radii.smt} & 6.02 & 0.28 & 0.15 \\
\texttt{1\_geometric3.smt} & 10.66 & 0.19 & 0.17 \\
\texttt{4\_input11\_2.smt} & 19.76 & 0.40 & 0.40 \\
\texttt{4\_input2.smt} & 4.89 & 0.13 & 0.14 \\
\texttt{4\_kissing\_set\_relax.smt} & 119.94$^{\dagger}$ & 117.07$^{\dagger}$ & 103.79$^{\dagger}$ \\ 
\bottomrule
\multicolumn{4}{l}{\small $^{\dagger}$ indicates a timeout.}
\end{tabular}
\end{table}